\newcommand{\norm}[1]{\ensuremath{\lVert #1 \rVert}}
\newcommand{\Oh}{\ensuremath{\mathcal{O}}\xspace}
\newcommand{\Ohtilda}{\ensuremath{\widetilde{\mathcal{O}}}\xspace}
\newcommand{\RR}{\mathbb{R}\xspace}
\newcommand{\NN}{\mathbb{N}\xspace}
\newcommand{\ZZ}{\mathbb{Z}\xspace}
\newcommand{\dtw}{\ensuremath{d_\text{DTW}\xspace}}
\newcommand{\dtwt}{\ensuremath{d_\text{DTW}^T\xspace}}
\newcommand{\gdtw}{\ensuremath{G_\text{DTW}\xspace}}
\newcommand{\taustart}{\ensuremath{\tau_\text{start}\xspace}}
\renewcommand{\epsilon}{\varepsilon}
\newcommand{\eps}{\varepsilon}
\DeclareMathOperator*{\argmin}{argmin}
\newcommand{\tOh}{\Ohtilda}
\title{Dynamic Time Warping Under Translation: Approximation Guided by Space-Filling Curves}
\titlerunning{(Approximation) Algorithms for Dynamic Time Warping Under Translation}
\author{Karl Bringmann}{Saarland University and Max Planck Institute for Informatics, Saarland Informatics Campus, Saarbrücken, Germany}{bringmann@cs.uni-saarland.de}{}{This work is part of the project TIPEA that has received funding from the European Research Council (ERC) under the European Unions Horizon 2020 research and innovation programme (grant agreement No. 850979).}
\author{Sándor Kisfaludi‑Bak}{Aalto University, Espoo, Finland}{sandor.kisfaludi-bak@aalto.fi}{}{Part of this research was conducted while the author was at the Max Planck Institute for Informatics, and part of it while he was at the Institute for Theoretical Studies, ETH Zürich.}
\author{Marvin Künnemann}{Institute for Theoretical Studies, ETH Zürich, Switzerland}{}{}{Research supported by Dr. Max Rössler, by the Walter Haefner Foundation,
and by the ETH Zürich Foundation.}
\author{Dániel Marx}{CISPA Helmholtz Center for Information Security, Saarbrücken, Germany}{marx@cispa.de}{https://orcid.org/0000-0002-5686-8314}{Research supported by the European Research Council (ERC) consolidator grant No.~725978 SYSTEMATICGRAPH.}
\author{André Nusser}{BARC, University of Copenhagen, Denmark}{anusser@mpi-inf.mpg.de}{https://orcid.org/0000-0002-6349-869X}{Part of this research was conducted while the author was at Saarbrücken Graduate School of Computer Science and Max Planck Institute for Informatics. The author is supported by the VILLUM Foundation grant 16582.}
\authorrunning{
K. Bringmann,
S. Kisfaludi‑Bak,
M. Künnemann,
D. Marx,
and A. Nusser} 
\keywords{Dynamic Time Warping, Sequence Similarity Measures} 
\begin{document}

\maketitle

\begin{abstract}
	The Dynamic Time Warping (DTW) distance is a popular measure of similarity for a variety of sequence data. For comparing polygonal curves $\pi, \sigma$ in $\mathbb{R}^d$, it provides a robust, outlier-insensitive alternative to the Fréchet distance.
However, like the Fréchet distance, the DTW distance is not invariant under translations.
Can we efficiently optimize the DTW distance of $\pi$ and $\sigma$ under arbitrary translations, to compare the curves' \emph{shape} irrespective of their absolute location?  

	There are surprisingly few works in this direction, which may be due to its computational intricacy:
For the Euclidean norm, this problem contains as a special case the geometric median problem, which provably admits no exact algebraic algorithm (that is, no algorithm using only addition, multiplication, and $k$-th roots). We thus investigate exact algorithms for non-Euclidean norms as well as approximation algorithms for the Euclidean norm.

For the $L_1$ norm in $\mathbb{R}^d$, we provide an $\mathcal{O}(n^{2(d+1)})$-time algorithm, i.e., an exact polynomial-time algorithm for constant $d$. Here and below, $n$ bounds the curves' complexities.
For the Euclidean norm in $\mathbb{R}^2$, we show that a simple problem-specific insight leads to a $(1+\varepsilon)$-approximation in time $\mathcal{O}(n^3/\varepsilon^2)$. We then show how to obtain a subcubic $\widetilde{\mathcal{O}}(n^{2.5}/\varepsilon^2)$ time algorithm with significant new ideas; this time comes close to the well-known quadratic time barrier for computing DTW for fixed translations. Technically, the algorithm is obtained by speeding up repeated DTW distance estimations using a dynamic data structure for maintaining shortest paths in weighted planar digraphs. Crucially, we show how to traverse a candidate set of translations using space-filling curves in a way that incurs only few updates to the data structure. 

We hope that our results will facilitate the use of DTW under translation both in theory and practice, and inspire similar algorithmic approaches for related geometric optimization problems.
\end{abstract}

\section{Introduction} \label{sec:introduction}

Fast algorithms for computing similarity measures for sequence data enable a number of applications such as signature/handwriting recognition~\cite{yu_qiao_affine_2006, efrat_curve_2007}, map matching~\cite{BrakatsoulasPSW05, WenkSP06}, analysis of GPS tracking data~\cite{BrankovicBKNPW20} and many more. For polygonal curves in $\mathbb{R}^d$, a popular measure is the Fréchet distance~\cite{AltG95, EiterM94} -- we refer to~\cite{HarPeled_frechetChapter} for an overview over the extensive literature. Unfortunately, the Fréchet distance is very sensitive to outliers, as the distance value may easily be dominated by erroneous samplings of the curves. Consequently, some contexts would profit from a measure that is more robust to outliers, such as the average/integral Fréchet distance (see~\cite{Buchin07,MaheshwariSS18}) or the well-known dynamic time warping (DTW) distance. The DTW distance is particularly popular for audio sequences (such as speech recognition) and other domains, but has seen an increasing number of uses for geometric curves~\cite{DBLP:conf/iccv/MunichP99, yu_qiao_affine_2006, efrat_curve_2007, YingPFA16, BrankovicBKNPW20}.

Given two polygonal curves $\pi=(\pi_1, \dots, \pi_n)$ and $\sigma = (\sigma_1, \dots, \sigma_m)$ in $\mathbb{R}^d$, their DTW distance $\dtw(\pi, \sigma)$ can be defined as follows: We imagine a dog walking on $\pi$ and its owner walking on $\sigma$. Both owner and dog start at the beginning of their curves, and in each step independently decide to either stay in place or jump to the next vertex, until both of them have reached the end of their curves. Formally, this yields a traversal $T=((i_1,j_1), \dots, (i_t, j_t))$ where $i_1=j_1=1$, $i_t=n$, $j_t=m$ and $(i_{\ell+1}, j_{\ell+1})\in \{(i_\ell + 1, j_\ell), (i_\ell, j_\ell+1), (i_\ell + 1, j_\ell + 1)\}$. We define the cost of this traversal as the sum of distances of dog and owner during the traversal, i.e., $\sum_{\ell=1}^t \norm{\pi_{i_\ell} - \sigma_{j_\ell}}$. The corresponding DTW distance $\dtw(\pi, \sigma)$ is defined as the minimum cost of such a traversal.\footnote{For comparison, to obtain the discrete Fréchet distance of $\pi$ and $\sigma$, we would minimize, over all traversals~$T$, the \emph{maximum} distance of the dog and its owner during $T$ -- one may think of the smallest leash length required to connect dog and owner while traversing their curves.} Note that this measure depends on the metric space we use for our curves $\pi, \sigma$. For any metric that we can evaluate in constant time, a simple dynamic programming approach computes $\dtw(\pi, \sigma)$ in time $O(nm)$, i.e., time $O(n^2)$ when both curves have at most $n$ vertices. While one can achieve mild improvements over this running time~\cite{GoldS18}, one can rule out $O(n^{2-\epsilon})$-time algorithms under the Strong Exponential Time Hypothesis, already for curves in $\mathbb{R}$~\cite{AbboudBW15, BringmannK15}. Even for constant-factor approximations, no strongly subquadratic algorithms are known, see~\cite{Kuszmaul19} for (sub)polynomial approximation guarantees and~\cite{AgarwalFPY16, YingPFA16} for approximation algorithms on restricted input models. 

Unfortunately, the DTW distance is not \emph{translation-invariant}: Distant copies of the same curve may have a much larger distance than differently shaped curves that stay close to each other, see Figure~\ref{fig:dtw_vs_shape}. For certain curve similarity applications such as signature recognition, it is thus frequently argued (sometimes implicitly) that a translation-invariant measure is desirable, see e.g.~\cite{DBLP:conf/iccv/MunichP99, EfratIV01, vlachos_elastic_2005, yu_qiao_affine_2006,BergC11}.

Arguably the most natural way to make any curve distance measure translation-invariant is to take its minimum under translations of the curves: correspondingly, DTW under translation is defined as $\dtwt(\pi, \sigma) \coloneqq \min_{\tau \in \mathbb{R}^d} \dtw(\pi, \sigma + \tau)$. Unfortunately, for computing this translation-invariant measure, much less is known than, e.g., for the Fréchet distance under translation. This state of the art, which we review below, is the starting point for our work.

\begin{figure}
    \includegraphics{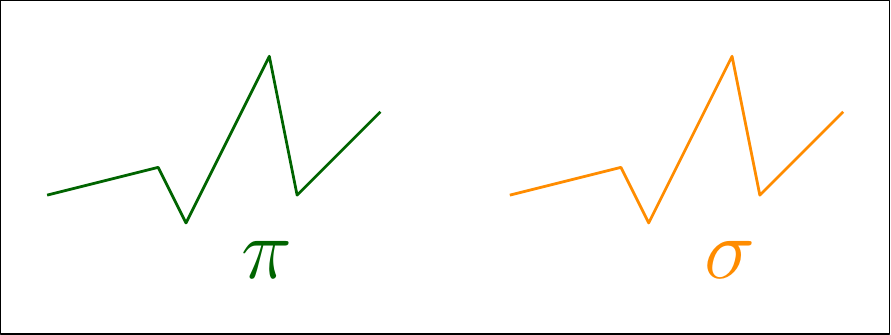}
	\hfill
    \includegraphics{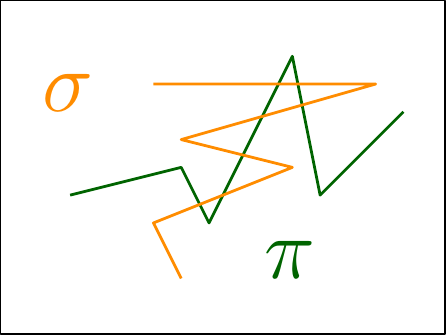}
	\caption{Curves with similar shape but large DTW distance (left) and different shape but small DTW distance (right).}
	\label{fig:dtw_vs_shape}
\end{figure}

\subparagraph*{Translation-invariant curve similarity measures.}
For the continuous Fréchet distance, the earliest algorithmic work studying its translation-invariant version dates back to 2001~\cite{EfratIV01, AltKW01}, with algorithms running in time $\tOh(n^{10})$ and $\tOh(n^8)$, respectively. For the discrete Fréchet distance under translation, algorithms have been improved from $\tOh(n^6)$~\cite{jiang2008protein}, via $\tOh(n^5)$~\cite{avraham2015faster}, to $\tOh(n^{4.667})$~\cite{BringmannKN21}, with a conditional lower bound of $n^{4-o(1)}$ based on the Strong Exponential Time Hypothesis~\cite{BringmannKN21}. These theoretical results have been complemented by an algorithm engineering study~\cite{BringmannKN20}. Approximation algorithms have been given by~\cite{EfratIV01, AltKW01}, including a $(1+\epsilon)$-approximation in time $O(n^2/\epsilon^2)$. Other works study related settings, such as more general transformations than translations~\cite{wenk2003phd, MosigC05}, or data structure variants~\cite{GudmundssonRSW21}. 

Unfortunately, we are not aware of algorithmic works with rigorous analyses for DTW under translation, but only heuristic approaches or works on related but different measures. Qiao and Yasuhara~\cite{yu_qiao_affine_2006} experimentally evaluate an iterative method for DTW distance under transformations including translation, rotation and scaling, but provide no theoretical guarantees. Vlachos, Kollios, and Gunopulos~\cite{vlachos_elastic_2005} study a closely related measure, a variation of the Longest Common Subsequence distance for geometric curves that is translation-invariant. This measure is similar to the DTW distance under translation using a binary distance metric with $d(x,y) = 0$ if $\norm{x-y}_\infty \le \epsilon$ and $d(x,y) = 1$ otherwise. For their measure, Vlachos et al.\ provide both exact and approximation algorithms. Munich and Perona~\cite{DBLP:conf/iccv/MunichP99} define another translation-invariant measure that roughly speaking minimizes differences in direction and velocity changes over traversals of the curves. Efrat, Fan, and Venkatasubramanian~\cite{efrat_curve_2007} study further variants of this measure.

One of the reasons for this lack of rigorous algorithmic work for DTW under translation may be its computational intricacy: Already when $\pi =(\pi_1,\dots,\pi_n)$ is a polygonal curve in $\mathbb{R}^d$ and $\sigma = (\sigma_1)$ consists of a single point $\sigma_1\in \mathbb{R}^d$, we obtain the geometric median problem as a special case. Specifically, the task simplifies to finding a point $x\in \mathbb{R}^d$ such that $\sum_{i=1}^n \norm{\pi_i - x}$ is minimized. This problem provably has no exact algebraic algorithm already for $n=5$ and $d=2$~\cite{Bajaj88} (that is, no algorithm using only
addition, multiplication, and $k$-th roots). We refer to~\cite{CohenLMPS16} for a recent near-linear time approximation algorithm and an overview of the literature on geometric median. By this lack of an exact, efficient algorithm for geometric median, we can thus hardly expect to solve DTW under translation in Euclidean spaces \emph{exactly}. This motivates to study the problem for norms other than Euclidean, as well as to study approximation algorithms for the Euclidean norm.

\subsection{Our results}

\subparagraph*{Exact algorithms for non-Euclidean norms.}
For the $L_1$ norm in $\mathbb{R}^d$, we give a polynomial-time exact algorithm whenever $d$ is constant.
\begin{theorem}
	For the $L_1$-norm in $\mathbb{R}^d$, we can solve DTW under translation  in time~$\Oh(n^{2(d+1)})$.
\end{theorem}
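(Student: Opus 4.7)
The plan is to reduce the continuous optimization over $\tau \in \mathbb{R}^d$ to searching a finite grid of $\mathcal{O}(n^{2d})$ candidate translations, each of which can be evaluated by the standard $\mathcal{O}(n^2)$-time DTW dynamic program.

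The key observation is that under the $L_1$ norm, for every fixed traversal $T = ((i_1,j_1),\dots,(i_t,j_t))$ the cost separates across coordinates:
\[
 \mathrm{cost}_T(\tau) \;=\; \sum_{\ell=1}^{t}\|\pi_{i_\ell}-\sigma_{j_\ell}-\tau\|_1 \;=\; \sum_{k=1}^{d} g_{T,k}(\tau_k), \qquad g_{T,k}(x) := \sum_{\ell=1}^{t} |\pi_{i_\ell,k}-\sigma_{j_\ell,k}-x|,
\]
so $\mathrm{cost}_T$ is a sum of $d$ convex piecewise-linear one-dimensional functions.

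Next I would show that some optimal translation lies on the finite grid $G := \prod_{k=1}^d \{\pi_{i,k}-\sigma_{j,k} \,:\, 1\le i\le n,\ 1\le j\le m\}$, whose size is $\mathcal{O}(n^{2d})$. Let $\tau^*$ be any optimal translation (which exists because $\dtw(\pi,\sigma+\tau)$ is continuous and tends to $\infty$ as $\|\tau\|\to\infty$), and let $T^*$ attain $\dtw(\pi,\sigma+\tau^*)$. Since $\dtw(\pi,\sigma+\tau) \le \mathrm{cost}_{T^*}(\tau)$ for every $\tau$ with equality at $\tau^*$, the point $\tau^*$ must globally minimize the convex function $\mathrm{cost}_{T^*}$. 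By coordinate separability, each $\tau^*_k$ minimizes $g_{T^*,k}$, which is the classical 1D $L_1$-median problem on the multiset $\{\pi_{i_\ell,k}-\sigma_{j_\ell,k}\}_\ell$. The set of minimizers of such a function is a closed interval that always contains at least one of its data points; hence we may replace each $\tau^*_k$ by such a grid coordinate without increasing $\mathrm{cost}_{T^*}$, and the resulting $\tau' \in G$ satisfies $\dtw(\pi,\sigma+\tau') \le \mathrm{cost}_{T^*}(\tau') = \mathrm{cost}_{T^*}(\tau^*) = \dtw(\pi,\sigma+\tau^*)$, so $\tau'$ is optimal. The algorithm then enumerates all $\mathcal{O}(n^{2d})$ candidates in $G$, computes $\dtw(\pi,\sigma+\tau)$ for each in $\mathcal{O}(n^2)$ time via the standard DP, and returns the minimum, for a total running time of $\mathcal{O}(n^{2(d+1)})$.

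The main conceptual step I expect to be the \emph{witness-traversal} argument: that $\tau^*$ must itself minimize its own optimal-traversal cost $\mathrm{cost}_{T^*}$. This is what turns the nonconvex minimization of $\dtw(\pi,\sigma+\tau)$ into the separable convex median problem, and hence makes the grid $G$ into a complete set of candidates. The remaining ingredients, coordinate separability of $\|\cdot\|_1$ and the fact that 1D $L_1$-medians are attained at data points, are standard.
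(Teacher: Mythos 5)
Your proof is correct, and it reaches the same candidate set and the same $\Oh(n^{2(d+1)})$ algorithm as the paper, but the justification that the grid $G$ is a complete candidate set is genuinely different. The paper partitions $\RR^d$ into axis-aligned boxes by the hyperplanes through the points $\pi_i - \sigma_j$, observes that inside each box every $\norm{\pi_i - (\sigma_j + \tau)}_1$ is linear in $\tau$, hence the cost of every fixed traversal is linear and $f(\tau) = \dtw(\pi, \sigma + \tau)$, being a pointwise minimum of linear functions, is \emph{concave} on each box; a concave function on a box attains its minimum at a corner, and the corners are precisely the points of $G$. You instead argue \emph{primally}: let $T^*$ witness the optimal value at some optimizer $\tau^*$; since $\dtw(\pi,\sigma+\tau) \le \mathrm{cost}_{T^*}(\tau)$ for all $\tau$ with equality at $\tau^*$, optimality of $\tau^*$ for $\dtw$ forces $\tau^*$ to also globally minimize the convex upper envelope $\mathrm{cost}_{T^*}$; by $L_1$-separability this decomposes into $d$ one-dimensional weighted-median problems, each of which has a data point in its argmin, so you can snap $\tau^*$ onto $G$ without increasing cost. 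Your witness-traversal step is the non-obvious ingredient (that the optimizer of the outer $\min$ is also a minimizer of the active inner function), and it buys you a slightly sharper structural statement than the paper's — the optimal translation can be taken to be, coordinate-wise, a median of the differences $\pi_{i_\ell} - \sigma_{j_\ell}$ along an optimal alignment — whereas the paper's concavity-within-cells argument is more directly geometric. Both arguments are correct and of comparable length; yours is a valid alternative.
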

Since in $\mathbb{R}^2$ we can transform the $L_\infty$ norm to the $L_1$ norm by rotating the input by $\frac{\pi}{2}$ and scaling by $1/\sqrt{2}$, this also yields an $\Oh(n^{6})$ time algorithm for $L_\infty$ in $\mathbb{R}^2$.
We prove the result in Appendix~\ref{sec:exact_l1}.

\subparagraph*{Approximation algorithms for the Euclidean norm.}
The main focus in this paper is DTW under translation in the Euclidean plane. Since there is no exact algebraic algorithm due to the special case of geometric median, we focus on developing an \emph{approximation} algorithm.

As a first baseline, we observe that DTW under translation is at least as hard to compute as DTW for a fixed translation, even for approximation (we prove this in Appendix~\ref{sec:LBfixedtranslation}). Since exactly computing DTW for a fixed translation requires time $n^{2-o(1)}$ under the Strong Exponential Time Hypothesis~\cite{AbboudBW15, BringmannK15}, and no subquadratic-time constant-factor approximation algorithm is known, the best we could hope for with current techniques would be a $f(1/\epsilon) n^2$-time algorithm. Can we reach this baseline or does optimizing over translations in $\mathbb{R}^2$ increase the problem's complexity (and if so, by how much)? 
 
For the discrete Fréchet distance, optimizing over a translation increases the time complexity from $n^{2\pm o(1)}$~\cite{EiterM94, Bringmann14} to at least $n^{4-o(1)}$ and at most $O(n^{4.667})$~\cite{BringmannKN21} (where the lower bounds are based on the Strong Exponential Time Hypothesis). For $(1+\epsilon)$-approximations, a simple algorithm indeed manages to match the baseline of $O(n^2/\epsilon^2)$, see~\cite{AltKW01}. Does the same hold true for the DTW distance?

Similar arguments to~\cite{AltKW01} only achieve an $\tOh(n^4/\epsilon^2)$ time bound for DTW under translation. Using an insight specific to the nature of the DTW distance, we present a surprisingly simple $\tOh(n^3/\epsilon^2)$ time algorithm. We describe both approaches in Section~\ref{sec:tech_overview}. Our most important contribution is to obtain a \emph{subcubic} $\tOh(n^{2.5}/\epsilon^2)$ time bound via a sophisticated approach that exploits geometric arguments (specifically, a traversal via space-filling curves) to reduce our problem to maintaining shortest paths in a dynamically changing directed grid graph.
\begin{theorem}
	For the Euclidean norm in $\mathbb{R}^2$, we can solve $(1+\epsilon)$-approximate DTW under translation in time $\tOh(n^{2.5}/\epsilon^2)$.
\end{theorem}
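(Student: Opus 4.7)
The plan is to construct a finite candidate set $\mathcal{T} \subseteq \RR^2$ of translations such that at least one $\tau^{\star} \in \mathcal{T}$ satisfies $\dtw(\pi,\sigma+\tau^{\star}) \le (1+\eps)\,\dtwt(\pi,\sigma)$, and then to evaluate (an approximation of) $\dtw(\pi,\sigma+\tau)$ for every $\tau \in \mathcal{T}$ while sharing work across translations. For the discretization I would reuse the idea behind the $\tOh(n^3/\eps^2)$ baseline: after a cheap constant-factor estimate of $\dtwt$, place an appropriately scaled grid of $\tOh(n/\eps^2)$ candidate translations that provably contains a $(1+\eps)$-optimal $\tau^{\star}$. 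Computing each $\dtw(\pi,\sigma+\tau)$ from scratch in $\Oh(n^2)$ time only matches the cubic baseline; the gain must come from \emph{updating} previous computations rather than redoing them.

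A standard reformulation expresses $\dtw(\pi,\sigma+\tau)$ as the cost of a source-to-sink shortest path in the $n\times n$ weighted directed grid graph \gdtw, where the edge entering $(i,j)$ has weight $\norm{\pi_i - \sigma_j - \tau}$. The graph itself is planar and independent of $\tau$; only the $\Theta(n^2)$ edge weights depend on it. I would therefore feed \gdtw into the offline dynamic single-source shortest-path data structure for weighted planar digraphs acknowledged in the paper, which supports edge-weight updates in time sublinear in $n^2$ and returns the current shortest-path value between two fixed vertices after each batch of updates.

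The core idea is how one walks through $\mathcal{T}$. I would order the candidates along a space-filling curve, for instance a Hilbert curve on the enclosing grid, so that consecutive translations $\tau,\tau'$ are geometrically close on average. Since each weight $\norm{\pi_i - \sigma_j - \tau}$ is $1$-Lipschitz in $\tau$, only those edges whose displacement $\pi_i - \sigma_j$ lies in a narrow neighborhood of the two relevant spheres need a weight update large enough to affect a $(1+\eps)$-approximation; the rest can keep their old weights, contributing only a controlled additive error that, summed over the traversal, remains $\Oh(\eps)\cdot\dtwt(\pi,\sigma)$. After pushing these updates into the data structure we query the current shortest-path value, and at the end report the minimum over all $\tau\in\mathcal{T}$.

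The main obstacle, and where the exponent $2.5$ is earned, is the charging argument bounding the \emph{total} number of significant weight updates over the whole space-filling sweep. I would combine the locality property of Hilbert-type curves (which bounds the sum of consecutive jump lengths) with a packing bound on how many index pairs $(i,j)$ can have $\pi_i - \sigma_j$ inside a given small disk, and with the Lipschitzness of the weights, to bound this count by $\tOh(n^{1.5}/\eps^{\Oh(1)})$. Balancing this against the sublinear per-update cost of the planar dynamic shortest-path structure and against $|\mathcal{T}| = \tOh(n/\eps^2)$ queries should yield $\tOh(n^{2.5}/\eps^2)$. The bookkeeping needed to simultaneously control the total number of updates and the cumulative $(1+\eps)$-approximation error is where most of the technical care goes.
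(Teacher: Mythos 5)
Your high-level architecture matches the paper exactly: reduce to a candidate set of $\tOh(n/\eps^2)$ translations via the dense-box argument, view each DTW evaluation as an $s$--$t$ shortest path in the weighted planar grid graph $\gdtw$, order the candidates along a space-filling curve, use an offline dynamic planar shortest-path structure, and exploit the $(1+\eps)$-slack to make weight updates sparse. The place where the real work happens, however, is the update-counting lemma, and the argument you sketch there has a genuine gap.

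Your proposed bound of $\tOh(n^{1.5}/\eps^{\Oh(1)})$ total updates, obtained by combining Hilbert-curve locality with ``a packing bound on how many index pairs $(i,j)$ can have $\pi_i-\sigma_j$ inside a given small disk,'' does not hold. No such packing bound exists: the multiset $P = \{\pi_i-\sigma_j\}$ has $nm$ points and they can all land in one tiny disk. If, say, $\sigma$ is a near-constant curve, then as the sweep passes through the $\Theta(1/\eps^2)$ candidates near the cluster, every one of the $\Theta(n^2)$ weights must be updated many times, because each changes by more than a $(1+\eps)$ factor across scales. The paper's Lemma~\ref{lem:few_updates} instead runs the argument \emph{per fixed $(i,j)$}: over the whole z-curve sweep, one only needs $\tOh(\frac{1}{\eps^2}\log\frac{n}{\eps})$ updates to $w_{i,j}$, by hierarchically merging z-curve cells that lie entirely within a single multiplicative annulus $[(1+\eps)^{\ell-1},(1+\eps)^{\ell+1}]$ around $p=\pi_i-\sigma_j$; a packing argument then bounds the number of merged cells per annulus by $\Oh(1/\eps)$, and there are $\Oh(\frac{1}{\eps}\log\frac{n}{\eps})$ annuli. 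Summing over all $(i,j)$ gives $\tOh(n^2/\eps^2)$ updates --- a larger count than you claimed, and the running time is then salvaged not by having few updates but by \emph{tuning the planar data structure's trade-off}: with $U(N)=\tOh(N^{1/4})=\tOh(\sqrt n)$ and $Q(N)=\tOh(N^{3/4})=\tOh(n^{1.5})$, both the $\tOh(n^2/\eps^2)$ updates and the $\tOh(n/\eps^2)$ queries cost $\tOh(n^{2.5}/\eps^2)$.

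Relatedly, the ``controlled additive error, summed over the traversal'' framing is the wrong invariant and would be painful to close. The paper never lets error accumulate: at every query, each weight $w_{i,j}$ is \emph{individually} within a multiplicative factor $(1+\eps)$ of the true weight $\norm{\pi_i-(\sigma_j+\tau)}$, so each queried shortest-path value is independently a $(1+\eps)$-approximation of $\dtw(\pi,\sigma+\tau)$. Switching to this per-weight multiplicative invariant is what makes the per-$(i,j)$ annulus decomposition the natural (and correct) counting argument.
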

Our techniques strengthen the paradigm of using dynamic algorithms for geometric optimization problems, for which we see a growing number of applications (besides classical examples such as~\cite{OvermarsY91}, see, e.g., recent work for the Fréchet distance under translation~\cite{avraham2015faster, BringmannKN21} or polygon placement~\cite{KunnemannN22}). Finally, only a sublinear factor of $\tOh(\sqrt{n})$ to the baseline of $\tOh(n^2/\epsilon^2)$ remains, which one might hope to decrease by further developing our ideas.

\subsection{Technical overview} \label{sec:tech_overview}

In this section, we describe the main ideas for our approximation algorithm for DTW under translation. To keep this exposition as simple as possible, we assume that both curves have the same complexity; let these curves be denoted by $\pi = (\pi_1,\dots, \pi_n)$ and $\sigma = (\sigma_1,\dots,\sigma_n)$ throughout this section.
The proof in Section~\ref{sec:apx_alg} gives the slightly more detailed arguments for possibly different complexities of the curves.
We start off with a simple algorithm that achieves a rather modest approximation guarantee: Let $\taustart \coloneqq \pi_1 - \sigma_1$ denote the translation of $\sigma$ that aligns the first points of $\pi$ and $\sigma + \tau$. It is straightforward to prove that the resulting DTW distance $\delta_\mathrm{start} \coloneqq \dtw(\pi, \sigma + \taustart)$ yields a $2n$-approximation to DTW under translation, i.e., $\dtwt(\pi, \sigma)\in [\delta_\mathrm{start}/(2n), \delta_\mathrm{start}]$. This follows from the fact that $\dtw(\pi, \sigma + \tau)$ is $(2n-1)$-Lipschitz with respect to $\tau$, and that $\tau^* \coloneqq \argmin_\tau \dtw(\pi, \sigma + \tau)$ satisfies $\norm{\taustart - \tau^*} \le \dtwt(\pi, \sigma)$, see Lemma~\ref{lem:2n-approx}. (Analogous arguments are known to give a 2-approximation for the Fréchet distance under translation~\cite{AltKW01, BringmannKN20}.)

With this rough approximation, the main task for approximating DTW under translation is to design an approximate decider with the following guarantee: Given the polygonal curves $\pi, \sigma$, a threshold $\delta > 0$ and approximation parameter $\epsilon > 0$, output a verdict ``$\dtwt(\pi, \sigma) \le (1+\epsilon)\delta$'' or ``$\dtwt(\pi, \sigma) > \delta$'' in time $T(n, \epsilon)$. In any case, the returned verdict has to be correct, i.e., if $\delta < \dtwt(\pi, \sigma)\le (1+\epsilon)\delta$ any output is admissible, otherwise it is uniquely determined. Given such an approximate decider, it is straightforward to obtain a $(1+\epsilon)$-approximation algorithm with running time $\Oh(T(n, \epsilon/3) \log (n/\epsilon))$ via binary search in the interval $[\delta_\mathrm{start}/(2n), \delta_\mathrm{start}]$, see Theorem~\ref{thm:main}.
We thus focus on the approximate decider for the remainder of this section.

\subparagraph*{A simple \boldmath$\Oh(n^4/\epsilon^2)$ solution.}
Let $B$ be the square of side length $2\delta$ centered at $\taustart = \pi_1 - \sigma_1$. To approximately decide whether $\dtw(\pi, \sigma) \le \delta$, we only need to consider translations in $B$, as any other translation~$\tau$ incurs a DTW distance larger than $\delta$ by $\dtw(\pi, \sigma + \tau) \ge \norm{\pi_1 - (\sigma_1 + \tau)} = \norm{\taustart - \tau} > \delta$.
Note that we can discretize this bounding box by a set~$Q$ of $\Oh( (n/\epsilon)^2 )$ translations such that for each translation $\tau^*\in B$, there is a close translation $\tau\in Q$ with $\norm{\tau^* - \tau} \le \frac{\epsilon \delta}{2n}$. Thus, if there is a translation $\tau^*$ with $\dtw(\pi, \sigma + \tau^*)\le \delta$, then by $(2n-1)$-Lipschitzness (Lemma~\ref{lem:lipschitz}), there is a translation $\tau \in Q$ with $\dtw(\pi, \sigma + \tau) \le \dtw(\pi, \sigma + \tau^*) + (2n-1)\cdot \norm{\tau^* - \tau} \le (1+\epsilon)\delta$. Consequently, by deciding $\dtw(\pi, \sigma + \tau) \leq (1+\eps)\delta$ for all $\tau\in Q$ using the exact $\Oh(n^2)$-time algorithm, we obtain an approximate decider with running time $T(n, \epsilon) = \Oh( n^4/\epsilon^2)$.

Note that the above arguments simplify the problem as follows: Find a set $Q$ of translations such that if there is some \emph{witness translation} $\tau^*$, i.e.,  $\dtw(\pi, \sigma+\tau^*) \le \delta$, then there is some $\tau \in Q$ with $\norm{\tau - \tau^*} \le \frac{\epsilon \delta}{2n}$. By computing $\dtw(\pi, \sigma + \tau)$ for all $\tau \in Q$, we can then approximately decide whether $\dtwt(\pi, \sigma) \le \delta$.

\subparagraph*{A more careful \boldmath$\Oh(n^3/\epsilon^2)$ solution.}
It turns out that we can significantly reduce the size of the set $Q$ by analyzing the properties of good DTW traversals more closely.
Consider a DTW traversal $((i_1, j_1), \dots, (i_t, j_t))$ of $\pi$ and $\sigma + \tau^*$, with traversal cost $\sum_{\ell = 1}^t \norm{\pi_{i_\ell} - (\sigma_{j_\ell} + \tau^*)} \le \delta$. 
Then, by a simple Markov argument, there can be at most $n/2$ pairs $\pi_{i_\ell}, \sigma_{j_\ell}$ with $\norm{\pi_{i_\ell} - (\sigma_{j_\ell} + \tau^*)} \ge 2\delta/n$, since otherwise already these pairs would lead to a traversal cost of more than $\delta$.
Since the traversal has $t\ge n$ steps, it follows that there are at least $t-n/2\ge n/2$ pairs $\pi_{i_\ell}, \sigma_{j_\ell}$ with $\norm{\pi_{i_\ell} - (\sigma_{j_\ell} + \tau^*)} \le 2\delta/n$.
Since $\pi_\ell - (\sigma_\ell + \tau^*) = (\pi_\ell - \sigma_\ell) - \tau^*$, this yields an important restriction on $\tau^*$: 
\begin{center} \itshape
	For any $\tau^*$ such that $\pi$ and $\sigma+\tau^*$ have DTW distance at most $\delta$,\\ there exist at least $n/2$ pairs $\pi_{i}, \sigma_{j}$ with $\norm{(\pi_i - \sigma_{j}) - \tau^*} \le 2\delta/n$.
\end{center}

This property immediately gives a simple randomized $\tOh(n^3/\epsilon^2)$ algorithm: Simply draw a pair $(i,j)$ uniformly at random from $[n]^2$ and test all translations $\tau$ given by $O(1/\epsilon^2)$ equally-spaced points in a $[-\frac{2\delta}{n}, \frac{2\delta}{n}]^2$-box $C_{i,j}$ centered at $\pi_i - \sigma_j$. If there is some $\tau^*\in C_{i,j}$ with $\dtw(\pi, \sigma+\tau^*) \le \delta$, one of the checked translations $\tau$ achieves $\dtw(\pi, \sigma + \tau)\le (1+\epsilon)\delta$. By the above property, we have that $\tau^*\in C_{i,j}$ with probability at least $(n/2)/n^2 = 1/(2n)$. Thus, it suffices to repeat this process $\tOh(n)$ times to find a good translation with high probability, if one exists. This yields a total running time of $\tOh(n \cdot \frac{1}{\eps^2} \cdot n^2) = \tOh(n^3/\epsilon^2)$.

In order to leverage this property \emph{deterministically}, define the multiset $P \coloneqq \{ \pi_i - \sigma_j \mid i,j\in [n]\}$ of $n^2$ points.
Recall that $B$ is the square of side length $2\delta$ centered at $\taustart = \pi_1 - \sigma_1$.
We impose a grid on the bounding box $B$ where each grid cell has side length $2\delta/n$. Consider a translation $\tau^*$ in some grid cell $C$ such that $\pi$ and $\sigma + \tau^*$ have DTW distance at most $\delta$. Then there must be $n/2$ points $p\in P$ with $\norm{p-\tau^*} \le 2\delta/n$ -- these points are distributed among $C$ and at most 3 neighboring cells of $C$.\footnote{Here, we say that two cells are neighboring if they share a common vertex.}
Thus, for any witness translation $\tau^*$, there must be a neighboring (including itself) grid cell containing at least $n/8$ points from $P$ -- we call such a cell \emph{dense}. Thus, we only need to check for translations that are inside a dense cell or neighboring a dense cell. Since $|P| = n^2$, there can be at most $|P|/(n/8)=8n$ dense cells, resulting in $\Oh(n)$ cells to check for a good translation.

Since each grid cell has side length $\Oh(\delta/n)$, we can discretize each relevant cell $C$ by $\Oh(1/\epsilon^2)$ many translations $Q_C$ such that if any $\tau^*\in C$ achieves $\dtw(\pi, \sigma+\tau^*) \le \delta$, then there is a $\tau\in Q_C$ with $\norm{\tau^*-\tau} \le \epsilon \delta / (2n)$ and thus $\dtw(\pi, \sigma + \tau) \le (1+\epsilon)\delta$. Thus, by letting $Q$ be the union of $Q_C$ for all $\Oh(n)$ cells $C$ that we need to check, we obtain $|Q|=\Oh(n/\epsilon^2)$, significantly improving over the previous bound of $\Oh(n^2/\epsilon^2)$. Computing the DTW distance for each translation in $Q$, we obtain a \emph{deterministic} $\Oh(n^3/\epsilon^2)$-time algorithm.

\begin{figure}
	\begin{center}
		\includegraphics[width=0.7\textwidth]{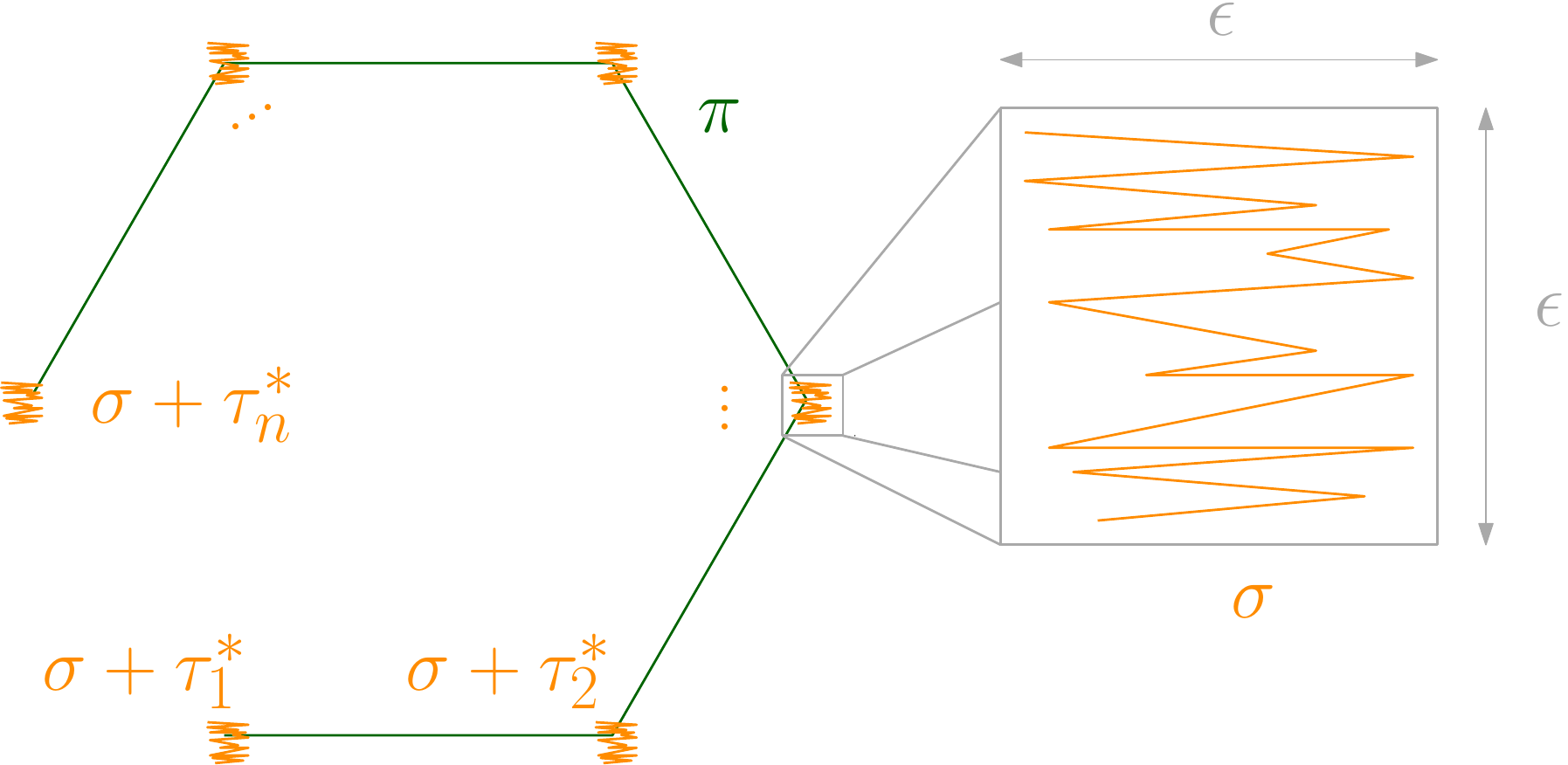}
	\end{center}
	\caption{If $\pi$ is given by a regular $n$-gon and $\sigma$ is a $3n$-vertex curve in a small $[0,\epsilon]^2$ area, where $\epsilon$ is small, then the DTW under translation distance can have $\Omega(n)$ local optima, each of which is near-optimal. These local optima correspond to translating $\sigma$ towards each vertex of $\pi$.}
	\label{fig:local-optima}
\end{figure}
\subparagraph*{Beating \boldmath$\Oh(n^3/\epsilon^2)$.}
Can we improve over the previous algorithm? A first idea would be to try to reduce the size of $Q$ even further, below $\Theta(n \cdot \mathrm{poly}(1/\eps))$. However, there is evidence that this route is rather difficult: One can construct instances with $\Omega(n)$ many near-optimal local optima that are well-separated from each other, see Figure~\ref{fig:local-optima}.
It thus appears quite challenging to avoid a check of $\Omega(n)$ regions of translations.

A different route is to speed up the computation of DTW distances $\dtw(\pi, \sigma + \tau)$ over all $\tau\in Q$, avoiding the naive time bound of $\Oh(|Q| \cdot n^2)$. Such approaches have been proven successful for related geometric optimization problems, such as Fréchet distance under translation~\cite{avraham2015faster, BringmannKN21} or polygon placement~\cite{KunnemannN22}. Crucially, one needs to exploit that the $|Q|$ distance computations are  related (for solving $|Q|$ independent instances, a conditional lower bound of $(|Q|n^2)^{1-o(1)}$ can be shown based on the quadratic-time hardness for DTW~\cite{AbboudBW15,BringmannK15}). To this end, we open up the black-box $\Oh(n^2)$-algorithm for DTW.

\begin{figure}
	\centering
	\includegraphics[width=\textwidth]{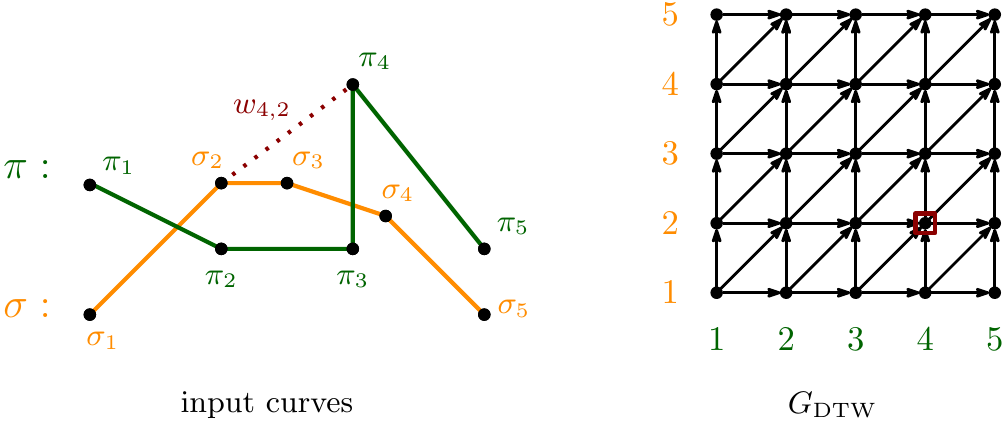}
	\caption{An example of two curves $\pi, \sigma$ and their dynamic time warping graph $\gdtw$.}
	\label{fig:dtw_graph}
\end{figure}

Given $\pi = (\pi_1,\dots, \pi_n)$ and $\sigma = (\sigma_1,\dots,\sigma_n)$, let $\gdtw$ denote the node-weighted directed grid graph with vertex set $V=\{(i,j)\mid i,j \in [n]\}$ and edge set $E$ consisting of horizontal edges from $(i,j)$ to $(i+1, j)$, vertical edges from $(i,j)$ to $(i,j+1)$ and diagonal edges from $(i,j)$ to $(i+1,j+1)$. Each node $(i,j)$ receives the weight $w_{i,j} = \norm{\pi_i - \sigma_j}$. Then it is not difficult to see that $\dtw(\pi, \sigma)$ is equal to the distance from $(1,1)$ to $(n,n)$ in $\gdtw$. As such, we can exploit algorithmic results on maintaining shortest paths in weighted planar digraphs under weight updates (here, one usually considers edge-weighted graphs, which subsumes the node-weighted setting). Unfortunately, when translating $\sigma$ by $\tau$, $\Omega(n^2)$ weights may change in $\gdtw$ so that even constant-time updates would lead to an $\Omega(n^3/\epsilon^2)$ time solution. In contrast, work on the Fréchet distance under translation~\cite{avraham2015faster,BringmannKN21} considers translations in an order that incurs only $\Oh(1)$ updates per translation.

Surprisingly, one can indeed reduce the number of weight updates below $\Oh(n^2)$ when we resort to \emph{approximating} each weight $w_{i,j}$ by an estimate $\norm{\pi_i - (\sigma_j + \tau)}/(1+\epsilon) \le w_{i,j} \le (1+\epsilon) \cdot \norm{\pi_i - (\sigma_j + \tau)}$. Specifically, we show how to traverse the $\Oh(n/\epsilon^2)$ translations in $Q$ in an order specified by a \emph{space-filling curve} such that we only need to update $\tOh(n^2/\epsilon^2)$ weights in total to maintain approximate weights. This statement and its analysis is one of the most interesting technical contributions of this paper and is proven in Section~\ref{sec:dynamic_graph_problem}. It remains to report the shortest distance from $(1,1)$ to $(n,n)$ in the directed grid graph $\gdtw$ for $\Oh(n/\epsilon^2)$ queries and $\tOh(n^2/\epsilon^2)$ weight updates.
For this task, we use the data structure due to Das et al.~\cite{DPGWN22} whose parameters can be set to give query time $\tOh(N^{3/4})$ and update time $\tOh(N^{1/4})$ for weighted planar digraphs with $N$ vertices.
Since $N=n^2$, we obtain a total running time of $\tOh(n^{2.5}/\epsilon^2)$, which improves polynomially over the previous $\tOh(n^3/\epsilon^2)$ solution.
We believe that our approach of maintaining approximate weights efficiently using a space-filling curve traversal may turn out useful for further improvements in similar contexts of geometric optimization problems.

\section{Preliminaries \& notation} \label{sec:prelims}

To denote index sets we use the notation $[n] \coloneqq \{1, \dots, n\}$.
Let $\pi = (\pi_1, \pi_2, \dots, \pi_n)$ and $\sigma = (\sigma_1, \sigma_2, \dots, \sigma_m)$ be two sequences of points in $\RR^d$. We assume $n \ge m$ without loss of generality.
To define the Dynamic Time Warping distance (DTW), we first introduce traversals. A sequence of index pairs $T = ((i_1,j_1), (i_2,j_2), \dots, (i_L, j_L))$ is a traversal of two curves of complexity $n$ and $m$ if $(i_1, j_1) = (1,1)$, $(i_L, j_L) = (n,m)$, and $(i_{\ell+1}, j_{\ell+1}) \in \{(i_\ell + 1, j_\ell), (i_\ell, j_\ell + 1), (i_\ell + 1, j_\ell + 1)\}$ for each $\ell \in [L-1]$. We call $L$ the number of steps of the traversal $T$.
Let $\mathcal{T}_{n,m}$ be the set of all traversals of curves of length $n$ and $m$.
The Dynamic Time Warping distance between $\pi$ and $\sigma$ is then defined as 
\[
	\dtw(\pi,\sigma) \coloneqq \min_{T \in \mathcal{T}} \sum_{(i,j) \in T} d(\pi_i, \sigma_j),\]
where for the metric $d(\cdot, \cdot)$, we use the $L_p$-norm $d(x,y)=\norm{x-y}_p$ throughout this paper. In the remainder, we omit the $p$ as it is either clear from the context, or the statement holds for all $p \in [1, \infty)$.
Furthermore, we often use bounds on the number of steps of the traversal. To that end, note that for $m \leq n$, any traversal in $\mathcal{T}_{n,m}$ consists of at least $n$ and at most $n+m-1$ steps.

For a sequence $\pi = (\pi_1,\dots,\pi_n)$ with $\pi_i \in \RR^d$ and a translation $\tau \in \RR^d$, we define the translated sequence as $\pi + \tau \coloneqq (\pi_1 + \tau, \pi_2 + \tau, \dots, \pi_n + \tau)$.
Dynamic Time Warping Under Translation is then defined as $\dtwt(\pi,\sigma) \coloneqq \min_{\tau \in \RR^d} \dtw(\pi, \sigma+\tau)$.
Recall that a function $f \colon \mathbb{R}^d \to \mathbb{R}$ is called $L$-Lipschitz (with respect to norm $\|.\|$) if for any $\tau, \tau' \in \mathbb{R}^d$ we have $|f(\tau) - f(\tau')| \le L \cdot \|\tau - \tau'\|$.
We prove the following lemma in Appendix~\ref{sec:deferred_proofs}.

\begin{lemma} \label{lem:lipschitz}
$\dtw(\pi,\sigma+\tau)$ is $(n+m-1)$-Lipschitz in $\tau$.
\end{lemma}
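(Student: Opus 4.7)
The plan is to use the standard ``transfer an optimal traversal'' argument. Fix two translations $\tau, \tau' \in \RR^d$, and let $T^* \in \mathcal{T}_{n,m}$ be an optimal traversal for $\dtw(\pi, \sigma + \tau')$, with $L^*$ steps. Since $T^*$ is a valid (though not necessarily optimal) traversal for the pair $(\pi, \sigma + \tau)$ as well, we get the upper bound
\[
\dtw(\pi, \sigma + \tau) \;\le\; \sum_{(i,j) \in T^*} \norm{\pi_i - (\sigma_j + \tau)}.
\]

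Next I would apply the triangle inequality pointwise: for every pair $(i,j) \in T^*$,
\[
\norm{\pi_i - (\sigma_j + \tau)} \;\le\; \norm{\pi_i - (\sigma_j + \tau')} + \norm{\tau' - \tau}.
\]
Summing over the $L^* \le n + m - 1$ index pairs in $T^*$ yields
\[
\dtw(\pi, \sigma + \tau) \;\le\; \dtw(\pi, \sigma + \tau') + L^* \cdot \norm{\tau - \tau'} \;\le\; \dtw(\pi, \sigma + \tau') + (n+m-1)\,\norm{\tau - \tau'}.
\]
Swapping the roles of $\tau$ and $\tau'$ gives the symmetric inequality, and combining the two establishes the desired $(n+m-1)$-Lipschitz bound.

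There is essentially no obstacle here; the only mild subtlety is ensuring that one uses the correct bound on the number of steps. This is why the preliminaries stated explicitly that, for $m \le n$, any traversal in $\mathcal{T}_{n,m}$ has at most $n+m-1$ steps (moving diagonally is free, but purely horizontal/vertical segments can together cover at most $n-1$ and $m-1$ advances, respectively). Using this bound is essential: a looser ``the traversal has at most $nm$ steps'' would yield a weaker Lipschitz constant and break the downstream applications (e.g., the $(2n-1)$-Lipschitzness used for the $2n$-approximation in the technical overview). Everything else is just triangle inequality and the observation that an optimal traversal for one translation remains feasible for any other.
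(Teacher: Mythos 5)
Your proof is correct and follows essentially the same approach as the paper: bound the cost change of a single traversal via the triangle inequality and the step bound $|T|\le n+m-1$, then pass to the minimum over traversals (the paper phrases this as ``for any traversal $T$\dots take the minimum,'' which is equivalent to your ``transfer an optimal traversal'' formulation). The paper additionally first proves the case $\tau'=0$ and then reduces the general case to it by reparametrizing, but this is a cosmetic difference.
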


The following lemma gives a simple $(n+m)$-approximation for DTW under translation and is a straightforward adaption of a corresponding $2$-approximation for the Fréchet distance under translation~\cite[Observation 2]{BringmannKN20}.  Note that one can create simple examples where this approximation ratio is almost tight. Again, we defer the proof to Appendix~\ref{sec:deferred_proofs}.
\begin{lemma}\label{lem:2n-approx}
	Let $\taustart = \pi_1-\sigma_1$. Then $\dtw(\pi, \sigma + \taustart) \le (n+m) \cdot \dtwt(\pi, \sigma)$. 
\end{lemma}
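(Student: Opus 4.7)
The plan is to combine the Lipschitz estimate from Lemma~\ref{lem:lipschitz} with a simple observation that the starting pair $(1,1)$ contributes to every traversal's cost. Let $\tau^* \in \argmin_{\tau \in \RR^d} \dtw(\pi, \sigma + \tau)$, so that $\dtwt(\pi,\sigma) = \dtw(\pi, \sigma + \tau^*)$.

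First I would use the $(n+m-1)$-Lipschitzness of $\tau \mapsto \dtw(\pi, \sigma + \tau)$ from Lemma~\ref{lem:lipschitz} to write
\[
\dtw(\pi, \sigma + \taustart) \;\le\; \dtw(\pi, \sigma + \tau^*) + (n+m-1)\cdot \norm{\taustart - \tau^*}.
\]
The main remaining task is to bound $\norm{\taustart - \tau^*}$ by $\dtwt(\pi,\sigma)$. For this, I would observe that $\taustart - \tau^* = (\pi_1 - \sigma_1) - \tau^* = \pi_1 - (\sigma_1 + \tau^*)$, so $\norm{\taustart - \tau^*} = \norm{\pi_1 - (\sigma_1 + \tau^*)}$. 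Since every traversal in $\mathcal{T}_{n,m}$ begins with the index pair $(1,1)$, the quantity $\norm{\pi_1 - (\sigma_1 + \tau^*)}$ is one of the nonnegative summands defining $\dtw(\pi, \sigma + \tau^*)$; hence
\[
\norm{\taustart - \tau^*} \;\le\; \dtw(\pi, \sigma + \tau^*) \;=\; \dtwt(\pi, \sigma).
\]

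Plugging this back yields
\[
\dtw(\pi, \sigma + \taustart) \;\le\; \dtwt(\pi,\sigma) + (n+m-1)\cdot \dtwt(\pi,\sigma) \;=\; (n+m)\cdot \dtwt(\pi,\sigma),
\]
which is the desired inequality. There is no real obstacle here: the proof is essentially two lines once one recognizes that the starting pair $(1,1)$ of any traversal forces $\norm{\pi_1 - (\sigma_1 + \tau^*)}$ to appear as a summand in the DTW cost, allowing the Lipschitz bound to be cashed in directly.
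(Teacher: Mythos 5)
Your proposal is correct and matches the paper's argument essentially line by line: both use that $\norm{\taustart - \tau^*} = \norm{\pi_1 - (\sigma_1 + \tau^*)}$ is a summand in the optimal traversal cost (hence at most $\dtwt(\pi,\sigma)$), and then cash in the $(n+m-1)$-Lipschitz bound from Lemma~\ref{lem:lipschitz}.
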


As discussed in Section~\ref{sec:introduction}, DTW corresponds to a grid graph problem. We now formally define this.
Given a DTW instance with curves $\pi = (\pi_1, \dots, \pi_n)$ and $\sigma = (\sigma_1, \dots, \sigma_m)$, we define a directed graph $\gdtw = (V,E,w)$ on a node-weighted grid (including certain diagonals) with node set $V \coloneqq \{ (i,j) \mid i \in [n], j \in [m] \}$, 
edge set
\begin{equation*}
\begin{split}
	E \coloneqq \{ ((i,j), (i+1,j)) \mid i \in [n-1], j \in [m] \} \quad \cup \quad \{ ((i,j), (i,j+1)) \mid i \in [n], j \in [m-1] \} \\ \cup \quad \{ ((i,j), (i+1,j+1)) \mid i \in [n-1], j \in [m-1] \},
\end{split}
\end{equation*}
and weights $w: V \to \RR$ with $w((i,j)) \coloneqq \|\pi_i - \sigma_j\|$. To simplify notation, we write $w_{i,j}$ instead of $w((i,j))$ to denote the weight of node $(i,j)$.
Note that finding a shortest path in this graph from $(1,1)$ to $(n,m)$ is equivalent to finding the minimum cost traversal.

\medskip
In order to define the order of the updates and queries in the dynamic graph problem that we introduce in Section~\ref{sec:apx_alg}, we use a space-filling curve on a grid.
Let
\[
	\mathcal{G}_R \coloneqq \{ i \cdot R \mid i \in \ZZ \} \times \{ j \cdot R \mid j \in \ZZ \}.
\]
be an infinite grid with resolution $R \in \RR$.
For our purpose, a space-filling curve is a hierarchical traversal of a finite grid: we partition this grid into four parts and, in some fixed order of the parts, recursively traverse each subgrid exhaustively before traversing the next one.
More precisely, we define the curve on the $2^k \times 2^k$ grid $C^k_{0,0} \coloneqq \mathcal{G}_R \cap [0,(2^k-1)R]^2$ for some $R$, and we recursively split $C_{i,j}^\ell$ into the boxes $C_{2i,2j}^{\ell-1}, C_{2i,2j+1}^{\ell-1}, C_{2i+1,2j}^{\ell-1}, C_{2i+1,2j+1}^{\ell-1}$ until they only contain a single grid point, i.e., until $\ell = 0$. This leads to the following definition:
\[
	C_{i,j}^\ell \coloneqq \{ (i2^\ell + s)R \mid s \in \{0, \dots, 2^\ell-1\} \} \times \{ (j2^\ell + s)R \mid s \in \{0, \dots, 2^\ell-1\} \}.
\]
For each cell $C_{i,j}^\ell$ with $\ell > 0$, the space-filling curve then traverses the points of the children in a way such that for each child all points are traversed in a continuous piece. 
For example, we first traverse all points of $C_{2i,2j+1}^{\ell-1}$, then $C_{2i+1,2j+1}^{\ell-1}$, then $C_{2i,2j}^{\ell-1}$, and finally $C_{2i+1,2j}^{\ell-1}$.
Recursively applying this leads to a sequence $z_1, \dots, z_{2^{2k}}$ of all points in $\mathcal{G}_R \cap [0,(2^k-1)R]^2$.
This sequence is called the z-curve, see Figure~\ref{fig:zcurve}. (However, any other order to traverse the children also works for our purpose.)

\begin{figure}
	\centering
	\includegraphics[width=.8\textwidth]{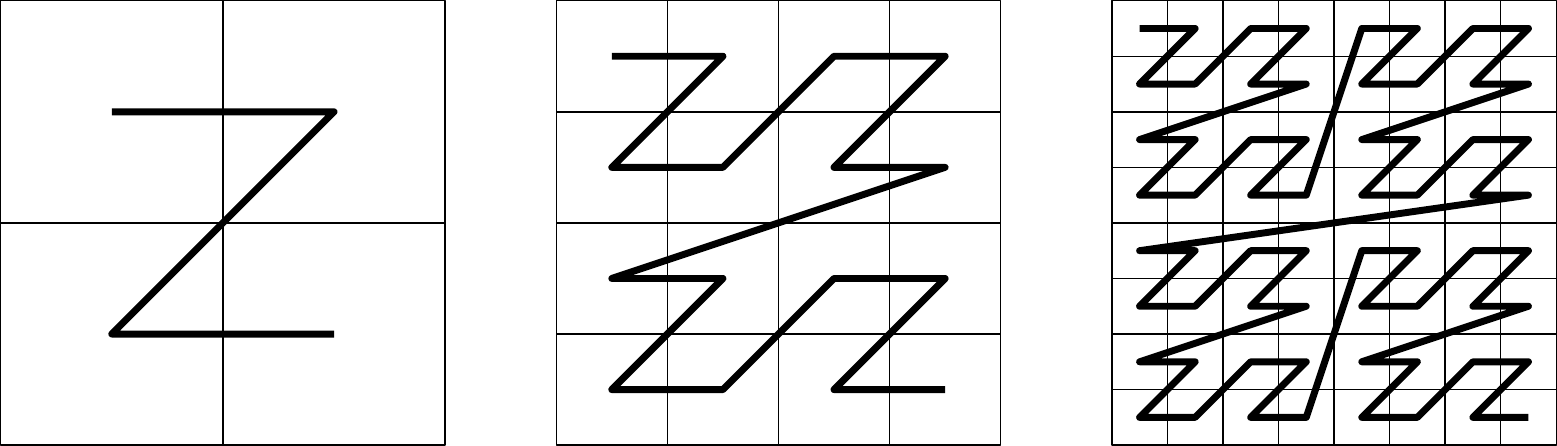}
	\caption{The traversal of the z-curve for $k = 1$, $k = 2$, and $k=3$.}
	\label{fig:zcurve}
\end{figure}

To argue about the space-filling curve traversals, it is sometimes useful to view the grid of points $\mathcal{G}_R$ equivalently as a grid of cells, i.e., as a set of squares partitioning $\mathbb{R}^2$. To switch between these views, build the Voronoi diagram of the point grid to obtain the cell grid, and conversely, use the center of each cell to obtain the point grid. We will freely use whichever view is most convenient in any context.

\section{\boldmath Approximating DTW under translation in $L_p$} \label{sec:apx_alg}

In this section we present an $\Ohtilda(n^{2.5}/\eps^2)$ algorithm for the problem of $(1+\eps)$-approximating DTW under translation in the Euclidean plane.
The algorithm that we present consists of two parts. First, we reduce to a dynamic shortest path problem on a grid graph. Second, we show that with the resulting number of updates and queries, we can use an existing dynamic graph algorithm to then obtain a subcubic algorithm for the problem at hand.

Recall that we consider the approximate decision problem: Given sequences $\pi = (\pi_1, \dots, \pi_n)$ and $\sigma = (\sigma_1, \dots, \sigma_m)$ with $\pi_i, \sigma_j \in \RR^2$, a distance $\delta \in \RR$, and an approximation parameter $\eps > 0$, either decide that $\dtwt(\pi, \sigma) \leq (1+\eps) \delta$ or that $\dtwt(\pi, \sigma) > \delta$. Recall that we assume $n \ge m$.
We first present a basic cubic algorithm that already captures some important properties of the subcubic algorithm that we subsequently present. 

\subsection{Cubic algorithm} \label{sec:cubic_alg}

We now present the cubic algorithm that was already outlined in Section~\ref{sec:tech_overview}.
First, if $\delta = 0$, we make a precise decision by testing for $\dtw(\pi, \sigma+\taustart) = 0$ with $\taustart = \pi_1 - \sigma_1$.
To facilitate the presentation, we furthermore assume that $\epsilon$ is given such that $\frac{n}{\epsilon} = 2^{k}$ for some $k \in \NN$. We can easily achieve this by rounding the input $\epsilon$ down to the largest value that fulfils this constraint, which changes the value of $\epsilon$ by at most a factor of 2.

In another preprocessing step, we round the coordinates of the points of $\pi$ and $\sigma$ to the closest multiple of $\frac{\delta}{4n} \eps$. This is feasible as it changes the DTW distance by less than
\[
	(n+m) \cdot \frac{\delta}{4n} \eps \leq \delta \frac{\eps}{2}.
\]
The multiset of translations from any point in $\sigma$ to any point in $\pi$ is then defined as
\[
	P \coloneqq \{ \pi_i - \sigma_j \mid i \in [n] \text{ and } j \in [m] \}.
\]
Note that by construction also all coordinates of all points in $P$ are multiples of $\frac{\delta}{4n} \eps$. Furthermore, as $P$ is a multiset, we have $|P| = nm$.
We now define a set of boxes that enables us to find dense regions. Consider the square $B \coloneqq [-\delta, \delta]^2 + \taustart$.
Partition $B$ into $n^2$ boxes $B_1, \dots, B_{n^2}$ of size $\frac{2\delta}{n} \times \frac{2\delta}{n}$ (note that their boundaries might intersect).
We now formally define the notion of a dense box already introduced intuitively in Section~\ref{sec:tech_overview}.
\begin{definition}[Dense Box]
	A box $B_i$ is dense if at least $\frac{n}{18}$ points of $P$ are contained in $B_i$.
\end{definition}
As $|P| = nm$, we obtain the following observation:
\begin{observation} \label{obs:dense_box}
	There are at most $18m$ dense boxes.
\end{observation}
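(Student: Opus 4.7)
The plan is to prove the observation by a direct counting argument. The multiset $P$ has cardinality $nm$, since it contains one element $\pi_i - \sigma_j$ for every pair $(i,j) \in [n] \times [m]$. The boxes $B_1, \dots, B_{n^2}$ partition $B$, so each point of $P$ lying in $B$ belongs to (at least) one box. The remark that ``boundaries might intersect'' only affects points of $P$ that lie exactly on a grid line shared by several boxes; I would handle this by fixing a consistent tie-breaking rule (e.g., a half-open convention so each box is of the form $[\cdot,\cdot) \times [\cdot,\cdot)$), which ensures that every point of $P \cap B$ is assigned to exactly one box. With this convention,
\[
\sum_{i=1}^{n^2} |P \cap B_i| \;=\; |P \cap B| \;\le\; |P| \;=\; nm.
\]

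Now let $k$ denote the number of dense boxes. By the definition of a dense box, each of them contributes at least $n/18$ to the left-hand side above. Hence
\[
k \cdot \frac{n}{18} \;\le\; \sum_{i=1}^{n^2} |P \cap B_i| \;\le\; nm,
\]
which rearranges to $k \le 18m$, as claimed.

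There is no real obstacle here; the only subtlety is the boundary convention, and as long as we fix any rule that assigns each point to a single box, the argument goes through. If one preferred not to break ties and instead counted a point of $P$ with multiplicity equal to the number of boxes containing it, the multiplicity would be at most $4$ (a corner case), and one would obtain a weaker bound like $k \le 72m$; adopting the half-open convention is the clean way to recover the stated constant $18$.
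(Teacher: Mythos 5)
Your counting argument is exactly the one the paper has in mind — the paper simply notes $|P|=nm$ and leaves the pigeonhole step implicit, so your write-up spells out the same reasoning in full. The remark about the half-open boundary convention is a reasonable bit of extra care, but the paper's intent is clearly the same direct count.
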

Note that we can find the dense boxes in time $\Ohtilda(|P|)$ by associating each point with the tuple of indices in $[n] \times [n]$ of its containing box and then sorting these tuples.
Now, let $N(B_i)$ be the neighborhood of a box $B_i$, i.e., $N(B_i) \coloneqq \{ B_j \mid \text{$B_j \cap B_i \neq \emptyset$} \}$.
Note that $B_i \in N(B_i)$, so each box has (up to) 9 neighbors.
The crucial property of dense boxes is that any witness translation $\tau$ with $\dtw(\pi, \sigma + \tau) \leq \delta$ has to be in the neighborhood of a dense box:
\begin{lemma} \label{lem:dense_cell}
	If $\dtwt(\pi, \sigma) \leq \delta$, then there exists a dense box $B_j$, a neigbor $B_i \in N(B_j)$, and a $\tau \in B_i$ such that $\dtw(\pi, \sigma + \tau) \leq \delta$.
\end{lemma}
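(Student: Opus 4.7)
The plan is to start from an optimal witness translation $\tau^* \in \argmin_\tau \dtw(\pi, \sigma + \tau)$, which by hypothesis satisfies $\dtw(\pi, \sigma + \tau^*) \le \delta$. First I observe $\tau^* \in B$: any traversal starts with the index pair $(1,1)$, so $\dtw(\pi, \sigma + \tau^*) \ge \norm{\pi_1 - (\sigma_1 + \tau^*)} = \norm{\taustart - \tau^*}$, which forces $\norm{\taustart - \tau^*}_\infty \le \norm{\taustart - \tau^*}_2 \le \delta$. Let $B_i$ denote a box of the partition that contains $\tau^*$, breaking ties arbitrarily if $\tau^*$ sits on a box boundary.

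Next I exhibit many traversal pairs close to $\tau^*$ via a Markov-style argument. Fix an optimal traversal $T \in \mathcal{T}_{n,m}$ realizing $\dtw(\pi, \sigma + \tau^*)$. Since $n \ge m$, $T$ has length $L \ge n$, and its index pairs are pairwise distinct (each step strictly advances at least one coordinate). If strictly more than $n/2$ steps had cost $\norm{\pi_{i_\ell} - (\sigma_{j_\ell} + \tau^*)} \ge 2\delta/n$, the total traversal cost would exceed $\delta$; hence at most $n/2$ such ``bad'' steps exist, and at least $L - n/2 \ge n/2$ distinct index pairs $(i,j)$ satisfy $\norm{(\pi_i - \sigma_j) - \tau^*} < 2\delta/n$. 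Equivalently, at least $n/2$ points of the multiset $P$ lie in the Euclidean ball of radius $2\delta/n$ around $\tau^*$, and therefore in the $L_\infty$-ball of the same radius, i.e., the axis-aligned square of side $4\delta/n$ centered at $\tau^*$.

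I then combine this with a simple geometric containment: since $\tau^* \in B_i$, a box of side $2\delta/n$, extending by $2\delta/n$ in each coordinate direction from $\tau^*$ stays within the $3 \times 3$ block $\bigcup_{B \in N(B_i)} B$. Hence the $\ge n/2$ close points of $P$ are distributed among at most $9$ boxes, and pigeonhole yields some $B_j \in N(B_i)$ containing at least $(n/2)/9 \ge n/18$ points of $P$; by definition, $B_j$ is dense. Since $N(\cdot)$ is symmetric, $B_i \in N(B_j)$, so setting $\tau := \tau^*$ produces the dense box $B_j$, the neighbor $B_i \in N(B_j)$, and the translation $\tau \in B_i$ with $\dtw(\pi, \sigma + \tau) \le \delta$ required by the statement.

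The main obstacle is keeping constants tight so that the $n/18$ threshold falls out cleanly. The argument rests on three essentially independent ingredients, each contributing a factor: (i) a traversal of length $L \ge n$ combined with a Markov split yields $n/2$ ``good'' pairs; (ii) the inclusion of the Euclidean ball in the corresponding $L_\infty$-ball lets us work with axis-aligned squares; and (iii) a point in a grid cell of side $s$ together with a radius-$s$ $L_\infty$-ball meets at most a $3\times 3$ block of cells. A minor subtlety is the boundary case when $\tau^*$ or a point of $P$ lies on a box boundary; any consistent tie-breaking suffices because $N(B_i)$ already includes all incident boxes by definition.
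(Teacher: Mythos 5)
Your proof is correct and follows essentially the same route as the paper's: a Markov-style count giving at least $n/2$ traversal pairs within distance $2\delta/n$ of the witness translation, the geometric observation that these points land in the $3\times 3$ block of boxes around the box containing the witness, and pigeonhole to find a box with at least $n/18$ points of $P$. The only stylistic difference is that the paper argues by contradiction (assuming the box containing $\tau$ is not adjacent to any dense box and deriving cost $>\delta$), whereas you argue directly from the optimal $\tau^*$; you also make explicit the minor points that $\tau^* \in B$ and that the traversal's index pairs are distinct, which the paper leaves implicit.
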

The proof is deferred to Appendix~\ref{sec:deferred_proofs}.

As we have to approximately decide whether $\dtw(\pi, \sigma+\tau) \leq \delta$ for any $\tau$ that is neighboring a dense box, we intersect each of these boxes with an $8\epsilon \times 8\epsilon$ grid and this gives us the set of points $Q$ that we have to evaluate. More precisely, let
$\mathcal{G} \coloneqq \mathcal{G}_{\frac{\delta}{4n}\eps} \cap B$,
where again $B = [-\delta, \delta]^2 + \taustart$.
Note that all points of $\mathcal{G}$ are still integer multiples of $\frac{\delta}{4n}\epsilon$. We now define our set of evaluation points to be
\[
	Q \coloneqq \{ \mathcal{G} \cap B_i \mid B_i \in N(B_j) \text{ for some dense box $B_j$} \}.
\]
Note that from Observation~\ref{obs:dense_box} and the bound $|\mathcal{G} \cap B_i| \in \Oh(\frac{1}{\eps^2})$, it follows that $|Q| \in \Oh(\frac{m}{\eps^2})$. 

Computing $\dtw(\pi,\sigma+q)$ for each $q \in Q$ suffices to implement an approximate decider. Indeed, if for some $q \in Q$ we find $\dtw(\pi,\sigma+q) \le (1+\eps)\delta$, then we conclude that $\dtwt(\pi,\sigma) \le (1+\eps)\delta$. Otherwise, if $\dtw(\pi, \sigma+q) > (1+\eps)\delta$ for all $q \in Q$, then we conclude that $\dtwt(\pi,\sigma) > \delta$, by the following correctness lemma proven in Appendix~\ref{sec:deferred_proofs}.
\begin{lemma}[Correctness] \label{lem:correctness}
If $\dtw(\pi, \sigma+q) > (1+\eps)\delta$ for all $q \in Q$, then $\dtwt(\pi,\sigma) > \delta$.
\end{lemma}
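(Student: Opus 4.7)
The plan is to prove the contrapositive: assuming $\dtwt(\pi,\sigma) \le \delta$, I will exhibit some $q \in Q$ satisfying $\dtw(\pi,\sigma+q) \le (1+\eps)\delta$. I would first invoke Lemma~\ref{lem:dense_cell} to obtain a dense box $B_j$, a neighbor $B_i \in N(B_j)$, and a translation $\tau \in B_i$ with $\dtw(\pi,\sigma+\tau) \le \delta$. Since $B_i$ is a neighbor of a dense box, the definition of $Q$ gives $\mathcal{G} \cap B_i \subseteq Q$, so it suffices to find a grid point $q \in \mathcal{G} \cap B_i$ close to $\tau$ and then invoke Lipschitz continuity (Lemma~\ref{lem:lipschitz}).

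To locate such a $q$, I would argue coordinate-wise. The box $B_i$ has side length $s = 2\delta/n$, and the grid $\mathcal{G}$ has spacing $r = \delta\eps/(4n)$, so $s \ge r$. For each coordinate, projecting $B_i$ onto the corresponding axis yields a closed interval of length $s$ that contains $\tau$'s projection; since any closed interval of length at least $r$ containing $\tau$'s projection also contains a grid coordinate at distance at most $r$ from it, I obtain $q \in \mathcal{G} \cap B_i \subseteq Q$ with $\|\tau - q\|_\infty \le r$, and hence $\|\tau - q\|_2 \le r\sqrt{2} = \delta\eps\sqrt{2}/(4n)$.

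The step I expect to require the most care is exactly the containment $q \in B_i$: if $\tau$ happens to lie near the boundary of $B_i$, the globally nearest point of $\mathcal{G}$ may fall into an adjacent box that is \emph{not} a neighbor of any dense box and is therefore not covered by $Q$. The coordinate-wise selection sidesteps this obstacle because it always picks a grid coordinate from the axis-projection of $B_i$, so the resulting $q$ is guaranteed to lie inside $B_i$ itself.

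Finally, applying Lemma~\ref{lem:lipschitz} together with $n \ge m$ yields
\[
\dtw(\pi,\sigma+q) \;\le\; \dtw(\pi,\sigma+\tau) + (n+m-1)\,\|\tau - q\| \;\le\; \delta + (2n-1)\cdot \frac{\delta\eps\sqrt{2}}{4n} \;\le\; (1+\eps)\delta,
\]
which completes the proof of the contrapositive.
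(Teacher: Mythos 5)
Your proof is correct and takes essentially the same route as the paper: prove the contrapositive, invoke Lemma~\ref{lem:dense_cell} to obtain a witness box $B_i$ and translation $\tau \in B_i$, find a nearby $q \in Q$, and apply Lipschitz continuity (Lemma~\ref{lem:lipschitz}). The one place where you diverge is in how you justify the existence of a suitable $q \in Q$. The paper simply asserts that some $q \in Q$ satisfies $\|\tau - q\| \le \frac{\delta}{4n}\eps$, implicitly assuming the nearest grid point to $\tau$ lies in a box covered by $Q$. You correctly flag that this is not automatic (the nearest grid point could spill into an adjacent box outside $N(B_j)$ when the box boundaries do not align with the grid), and your coordinate-wise selection cleanly resolves this by forcing $q \in \mathcal{G} \cap B_i \subseteq Q$. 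The price you pay is a slightly looser bound, $\|\tau - q\| \le \sqrt{2}\cdot\frac{\delta\eps}{4n}$ rather than $\frac{\delta\eps}{4n}$, but you verify that the final inequality $\delta + (2n-1)\cdot\frac{\delta\eps\sqrt{2}}{4n} \le (1+\eps)\delta$ still holds, so the conclusion goes through. In short: same decomposition and same key lemmas, with your version filling in a step the paper glosses over, at the cost of a constant factor in the intermediate estimate.
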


If we just evaluate each point in $Q$ naively, then the running time is $\Oh(n m^2 \left(\frac{1}{\eps}\right)^2)$, as there are $\Oh(m)$ dense cells, each of them with $\left(\frac{1}{\eps}\right)^2$ grid points, and each DTW evaluation takes time $\Oh(nm)$.
In the next section, instead of naively recomputing DTW for each translation, we dynamically update the DTW graph weights and then query for the shortest path.

\subsection{Reduction to dynamic graph problem} \label{sec:dynamic_graph_problem}

Now we present the first step in solving DTW under translation in subcubic time. To this end, we transform our problem into a dynamic shortest path problem on a grid graph.

\paragraph*{Dynamic graph problem.}
Recall that computing DTW for a fixed translation is a shortest path problem on a grid graph, see Section~\ref{sec:prelims}. More precisely, in the grid graph with node weights $w_{i,j} = \|\pi_i - (\sigma_j + q)\|$ the shortest path distance from $(1,1)$ to $(n,m)$ is equal to $\dtw(\pi,\sigma+q)$. However, as we only want to compute a $(1+\eps)$-approximation, we can relax the condition on the node weights to:
\begin{align} \label{eq:nodeweights}
	\frac{\|\pi_i - (\sigma_j+q)\|}{(1+\eps)} \leq w_{i,j} \leq (1+\eps)\|\pi_i - (\sigma_j+q)\|.
\end{align}
Observe that for such node weights the shortest path distance from $(1,1)$ to $(n,m)$ is equal to $\dtw(\pi,\sigma+q)$ up to a factor $(1+\eps)$. 

We choose the same set of query translations $Q$ as in Section~\ref{sec:cubic_alg}. We iterate over all $q \in Q$, and for each $q$ we first update the node weights in the grid graph in order to satisfy (\ref{eq:nodeweights}) and then we query the shortest path distance from $(1,1)$ to $(n,m)$ in the grid graph, obtaining a $(1+\eps)$-approximation of $\dtw(\pi, \sigma+q)$. As in Section~\ref{sec:cubic_alg}, this yields a $(1+\Oh(\eps))$-approximation of $\dtwt(\pi,\sigma)$ (and after scaling $\eps$ this becomes a $(1+\eps)$-approximation). 
Note that we did not fix the ordering of the query translations $q \in Q$ yet. In the following, we first fix this ordering, and then argue that our ordering guarantees that the total number of node weight updates is small, and furthermore we can efficiently determine which node weight updates have to be performed.

\paragraph*{Query ordering.}
Consider the z-curve over the grid $\mathcal{G} = \mathcal{G}_{\frac{\delta}{4n}\eps} \cap [-2\delta, 2\delta)^2 + \taustart$.
Note that this z-curve has depth $\log_2(16\frac{n}{\eps})$, and recall that $\frac{n}{\eps}$ is a power of 2.
The points of $Q$ lie on the z-curve, as $\pi, \sigma$ are rounded and the grid has resolution $\frac{\delta}{4n}\eps$. Thus, the z-curve induces an ordering of $Q$, and this is the ordering that we choose.

\paragraph*{Updates.}
We now describe how we determine the node weight updates in the dynamic grid graph problem, to ensure that when we run the shortest path query corresponding to $q \in Q$ the node weights satisfy (\ref{eq:nodeweights}).
We first argue why the total number of node weight updates is small, and subsequently discuss how to compute the sequence of node weight updates.
\begin{lemma} \label{lem:few_updates}
Consider the sequence of points $\tau_1, \dots, \tau_{\left(16n/\eps\right)^2}$ given by the z-curve on the grid~$\mathcal{G}$, and fix $i,j$. Using only $\Oh(\frac{1}{\eps^2} \log \frac{n}{\eps})$ updates to the node weight $w_{i,j}$, we can maintain $w_{i,j}$ as a $(1+\eps)$-approximation of the distance $\|\pi_i - (\sigma_j + \tau_k)\|$ while iterating over $k =1,\ldots, (16n/\eps)^2$.
\end{lemma}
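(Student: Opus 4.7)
The plan is to exploit the hierarchical structure of the z-curve. Fix $(i,j)$ and set $p := \pi_i - \sigma_j$. Recall that $\mathcal{G}$ has resolution $R = \tfrac{\delta}{4n}\eps$ and that the z-curve partitions $\mathcal{G}$ into cells $C^\ell_{a,b}$ at levels $\ell = 0,1,\dots,L$ with $L = \log_2(16n/\eps)$; any two points of a level-$\ell$ cell lie within $\sqrt{2}\,s_\ell$ of each other, where $s_\ell := 2^\ell R$. The key geometric notion I would introduce is \emph{$p$-safety}: a cell $C$ at level $\ell$ is $p$-safe if $\mathrm{dist}(p,C) \ge \sqrt{2}\,s_\ell / \eps$. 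A one-line triangle-inequality argument then yields, for any $\tau,\tau' \in C$,
\[
\bigl|\|p-\tau\|-\|p-\tau'\|\bigr| \;\le\; \sqrt{2}\,s_\ell \;\le\; \eps \cdot \mathrm{dist}(p,C) \;\le\; \eps \|p-\tau\|,
\]
so fixing $w_{i,j} = \|p-\tau\|$ serves as a $(1+\eps)$-approximation of $\|p-\tau'\|$ for every $\tau' \in C$.

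The procedure to analyze is now the natural one. Traverse $\tau_1,\tau_2,\dots$ in z-curve order, and at each step let $C_k^\ast$ be the largest $p$-safe cell of the hierarchy containing $\tau_k$, taking $C_k^\ast := \{\tau_k\}$ if no cell in the hierarchy containing $\tau_k$ is $p$-safe. Perform the single update $w_{i,j}\leftarrow\|p-\tau_k\|$ exactly when $C_k^\ast\neq C_{k-1}^\ast$, and no update otherwise. Correctness is immediate from $p$-safety, and because the z-curve visits the points of any hierarchy cell in one contiguous block, the total number of updates equals the number of distinct cells ever appearing as $C_k^\ast$: namely, the number of maximal $p$-safe cells entered plus the number of non-safe singletons encountered.

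The core of the argument is the counting step. I would show that at every level $\ell$ there are only $\Oh(1/\eps^2)$ maximal $p$-safe cells: such a cell has a non-$p$-safe parent at level $\ell+1$, and a level-$(\ell+1)$ cell is non-safe only if it meets the disk of radius $\sqrt{2}\,s_{\ell+1}/\eps$ around $p$. Those non-safe cells together cover a region of area $\Oh((s_{\ell+1}/\eps)^2)$ by pieces of area $s_{\ell+1}^2$, so there are $\Oh(1/\eps^2)$ of them and hence at most $\Oh(1/\eps^2)$ maximal $p$-safe children. Applied at $\ell=0$ the same bound controls the non-safe singletons. Summing over $\ell \in \{0,\dots,L\}$ gives the claimed total of $\Oh\bigl(\tfrac{1}{\eps^2}\log\tfrac{n}{\eps}\bigr)$ updates.

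The main obstacle I anticipate is not conceptual but bookkeeping: the analysis above produces a $(1+\Oh(\eps))$-approximation, so the safety threshold and the rounding in the preprocessing must be scaled by an absolute constant to land at exactly $(1+\eps)$. In addition, the finiteness of $\mathcal{G}$ means that cells near the boundary may be truncated; however, truncation can only shrink each level's cell count, so it is harmless for the upper bound.
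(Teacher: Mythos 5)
Your proof is correct and reaches the same bound as the paper, but via a genuinely different decomposition of the update count. The paper's stopping rule is \emph{radial}: it terminates the recursion at a cell $C$ once all of $C$ fits in a fixed distance band $[(1+\eps)^{\ell-1},(1+\eps)^{\ell+1}]$ around $p$, and then counts $\Oh(1/\eps)$ cells per band times $\Oh(\frac{1}{\eps}\log\frac{n}{\eps})$ bands (since consecutive bands differ by a factor $1+\eps$, the cell diameter must be $\Omega((1+\eps)^\ell\eps)$, and a packing argument in the annulus of area $\Oh((1+\eps)^{2\ell}\eps)$ finishes the count). Your stopping rule is \emph{hierarchical}: a cell is $p$-safe once its diameter is at most $\eps$ times its distance to $p$, and you count $\Oh(1/\eps^2)$ maximal $p$-safe cells per z-curve level via a direct area-packing bound on the non-safe parents inside a disk of radius $\Oh(s_{\ell+1}/\eps)$, then sum over the $\Oh(\log\frac{n}{\eps})$ levels. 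The two trade the $1/\eps$ and $\log$ factors between the per-group count and the number of groups, but both land at $\Oh(\frac{1}{\eps^2}\log\frac{n}{\eps})$. Your version has the advantage that the per-level counting is a single clean volume argument and that it never needs the paper's $w_{i,j}=\infty$ special case for far-away $p$ (the root is then simply $p$-safe, giving one update); the paper's version has the advantage that the stored value $(1+\eps)^\ell$ is a bona fide two-sided $(1+\eps)$-approximation without any threshold rescaling, whereas, as you note, your $\|p-\tau\|$-based update gives only $(1-\eps)\le\cdot\le(1+\eps)$ and needs a constant rescaling of $\eps$ because $\frac{1}{1-\eps}>1+\eps$. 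Both are sound; yours is arguably the more elementary route.
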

\begin{proof}
Note that for $p := \pi_i - \sigma_j$ we have $\|p-\tau\| = \|\pi_i - (\sigma_j + \tau)\|$, and thus the distance that we want to maintain is $\|p-\tau_k\|$. 
Additionally, note that $p$ lies on the z-curve as the coordinates of the curves are rounded to $\frac{\delta}{4n}\eps$ and, furthermore, if $p$ is further than $\delta$ from $\taustart = \pi_1 - \sigma_1$, then using the pair $i,j$ in the traversal would incur a distance of more than $\delta$ for all $q \in Q$ and thus we can just set $w_{i,j} = \infty$.

Now, consider the following process on the recursive definition of the z-curve: Starting with the root $C_{0,0}$, recursively explore all children of each cell. Stop the process at a cell $C$ if there is an $\ell \in \ZZ$ such that all points $\tau \in C$ have a distance
\[
	(1+\eps)^{\ell-1} \leq \|p - \tau\| \leq (1+\eps)^{\ell+1}.
\]
In this case, for all points in $C$ the value $(1+\eps)^\ell$ is a valid $(1+\eps)$-approximation for the distance to $p$, so we associate the distance $(1+\eps)^\ell$ with $C$.

Note that the process is well-defined, since at the lowest level of the recursion a cell contains only a single point of $\mathcal{G}$, and thus at the latest on this level the process will stop.
The process partitions the grid $\mathcal{G}$ into cells $C_1, \dots, C_t$, which are exhaustively explored in this order by the z-curve, see Figure~\ref{fig:partition} for an illustration.
\begin{figure}
	\centering
	\includegraphics[width=.6\textwidth]{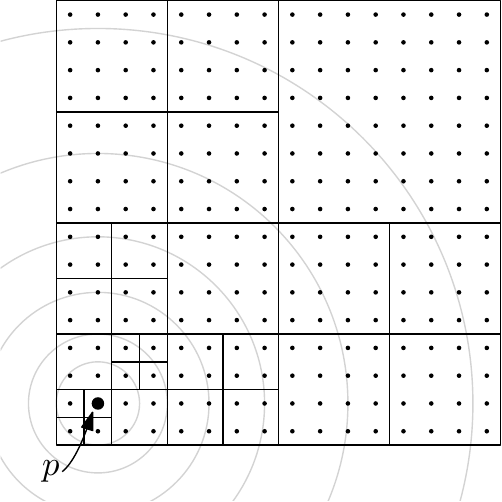}
	\caption{The partition of cells induced by the point $p$ on the recursively defined cells of the z-curve such that for each cell a single update suffices to ensure a $(1+\eps)$-approximation on the distance to $p$. Note that the cells become larger when further away from $p$.}
	\label{fig:partition}
\end{figure}
In particular, the value of $t$ is an upper bound on the number of updates needed to approximately maintain $\|p-\tau_k\|$ while iterating over all points in $\mathcal{G}$ in z-order.
We next bound the diameter of the cells for a specific associated distance, to subsequently show that this induces a small number of cells in the partition.
To this end, consider a specific cell $C_r$ and its associated distance $(1+\eps)^\ell$.
As we continued exploring the children of the parent cell $C$ of $C_r$, there have to be two points $z_1, z_2 \in C$ such that either $\|z_1 - p\| < (1+\eps)^{\ell-1} \text{ and } \|z_2 - p\| > (1+\eps)^{\ell}$, or $\|z_1 - p\| < (1+\eps)^{\ell} \text{ and } \|z_2 - p\| > (1+\eps)^{\ell+1}$.
By triangle inequality, $C$ has diameter at least
\[
	\|z_1 - z_2\| \geq \|z_2 - p\| - \|z_1 - p\| > (1+\eps)^{\ell} - (1+\eps)^{\ell-1} = (1+\eps)^{\ell-1}((1+\eps) - 1) = (1+\eps)^{\ell-1} \eps.
\]
In the recursive definition of the z-curve, the diameter of a cell decreases at most by a constant factor from parent to child if the child is not a single point.
Thus $C_j$ has diameter $\Omega((1+\eps)^\ell \eps)$ if $|C_j| > 1$.
If $|C_j| = 1$, then the Voronoi cell of $C_j$ of the Voronoi diagram of $\mathcal{G}$ has diameter $\Omega((1+\eps)^\ell \eps)$.
Thus, for both cases it holds that there is a square with area $\Omega((1+\eps)^{2\ell} \eps^2)$ that only contains points from $C_j$ but no other cell $C_{j'}, j' \neq j$.

Recall that the area of a ball of radius $R$ in the $L_\rho$-norm is equal to $\nu_\rho R^2$, where $\nu_\rho$ depends only on the $L_\rho$-norm and is thus a constant for our purpose. Hence, the area of all points between distance $(1+\eps)^{\ell-1}$ and $(1+\eps)^{\ell+1}$ from $p$ is equal to
\[
	\nu_\rho (1+\eps)^{2(\ell+1)} - \nu_\rho (1+\eps)^{2(\ell-1)} = \nu_\rho (1+\eps)^{2(\ell-1)} ((1+\eps)^4 - 1) = \Oh((1+\eps)^{2\ell}\eps).
\]
Thus, there can be at most
\[
	\frac{\Oh((1+\eps)^{2\ell} \eps)}{\Omega((1+\eps)^{2\ell} \eps^2)} = \Oh\left(\frac{1}{\eps}\right)
\]
cells associated with distance $(1+\eps)^\ell$.
Finally, there are at most $\Oh(\log_{1+\eps} \frac{n}{\eps}) = \Oh(\frac{1}{\eps} \log \frac{n}{\eps})$ different associated distances, as the minimum non-zero distance is $\Omega(\frac{\delta}{n}\eps)$ and the largest distance is $\Oh(\delta)$.
Consequently, the total number of updates can be bounded by $\Oh(\frac{1}{\eps^2} \log \frac{n}{\eps})$.
\end{proof}

We now discuss how we explicitly compute the updates. Note that explicitly checking for updates in each node of the traversal of the z-curve is prohibitive. Thus, we have to devise a non-naive way of computing the updates. Indeed, Lemma~\ref{lem:few_updates} can be turned into an algorithm.
\begin{lemma} \label{lem:explicit_updates}
The updates in Lemma~\ref{lem:few_updates} can explicitly be computed in time $\Oh\left(\frac{1}{\eps^2} \cdot \log \frac{n}{\eps}\right)$.
\end{lemma}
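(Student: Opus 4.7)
The plan is to perform a depth-first traversal of the quadtree implicit in the z-curve, starting from the root cell $C^k_{0,0}$ (with $2^k = 16n/\eps$) and recursing into children in z-curve order. At each visited cell $C$, I would compute in $\Oh(1)$ the minimum and maximum distances $d_{\min}(C), d_{\max}(C)$ from $p \coloneqq \pi_i - \sigma_j$ to the closed square spanned by $C$; both are standard closed-form expressions for an axis-aligned square and a point. The cell satisfies the stopping condition of Lemma~\ref{lem:few_updates} exactly when $d_{\max}(C) \le (1+\eps)^2 \cdot d_{\min}(C)$, which is checkable in $\Oh(1)$. In the stopping case I set $\ell \coloneqq \lceil \log_{1+\eps} d_{\min}(C) \rceil$ and emit a single update event ``at step $k^\star(C)$, set $w_{i,j} \leftarrow (1+\eps)^\ell$'', where $k^\star(C)$ is the first index at which the z-curve enters $C$ (immediately computable from $C$'s quadtree coordinates). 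Otherwise, I recurse into all four children in z-curve order. The trivial case that $p$ is further than $\delta$ from $\taustart$ is handled once at the start by setting $w_{i,j} \leftarrow \infty$, as noted in the proof of Lemma~\ref{lem:few_updates}.

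Correctness follows directly from Lemma~\ref{lem:few_updates}: the emitted stopping cells coincide with the partition $C_1, \dots, C_t$ constructed there, and because DFS with z-curve child order matches the z-order on the leaves, the events are emitted in z-curve order. Since the z-curve visits the grid points inside any given $C_r$ contiguously, one update at its first entry time suffices to maintain the $(1+\eps)$-approximation of $w_{i,j}$ throughout the cell.

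For the running time, denote by $t = \Oh(\tfrac{1}{\eps^2} \log \tfrac{n}{\eps})$ the number of stopping cells bounded in Lemma~\ref{lem:few_updates}. The traversed subtree is a quadtree whose $t$ leaves are exactly the stopping cells; every internal node has all four children traversed, so there are at most $(t-1)/3$ internal nodes, giving $\Oh(t)$ cells visited in total. With $\Oh(1)$ work per cell the total running time is $\Oh(\tfrac{1}{\eps^2} \log \tfrac{n}{\eps})$. The main subtlety is precisely this bookkeeping: a priori the recursion could branch heavily before hitting any stopping cells, but the fixed finite depth $\log_2(16n/\eps)$ of the quadtree combined with the fact that single-point leaves trivially satisfy the stopping condition (since $d_{\min} = d_{\max}$) guarantees that every internal node visited has a stopping-cell descendant, so the standard $4$-ary leaf-to-internal-node ratio applies and the overall cell count stays proportional to $t$.
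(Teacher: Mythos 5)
Your high-level plan is the same as the paper's: a depth-first traversal of the implicit quadtree of the z-curve, $\Oh(1)$ work per visited cell, and a charging argument showing that the number of visited cells is dominated by the number of stopping cells $t$, which Lemma~\ref{lem:few_updates} bounds by $\Oh(\frac{1}{\eps^2}\log\frac{n}{\eps})$. Your bookkeeping for the tree-size bound (a $4$-ary tree with $t$ leaves has at most $(t-1)/3$ internal nodes, every internal node has a stopping-cell descendant because single-point leaves trivially terminate) is more explicit than the paper's, which is a nice touch; and the observation that DFS in z-curve child order emits events already in z-order matches the paper.

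However, there is a genuine bug in the stopping test. You claim that the stopping condition of Lemma~\ref{lem:few_updates} holds \emph{exactly} when $d_{\max}(C) \le (1+\eps)^2 d_{\min}(C)$, and you then set $\ell \coloneqq \lceil \log_{1+\eps} d_{\min}(C) \rceil$. The paper's stopping condition asks for an \emph{integer} $\ell$ with $(1+\eps)^{\ell-1} \le d_{\min}(C)$ and $d_{\max}(C) \le (1+\eps)^{\ell+1}$. Such an $\ell$ must lie in the interval $[\log_{1+\eps}d_{\max}(C) - 1,\; \log_{1+\eps}d_{\min}(C)+1]$, whose length is $2 - \log_{1+\eps}(d_{\max}/d_{\min})$. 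Your test $d_{\max} \le (1+\eps)^2 d_{\min}$ only guarantees that this interval is nonempty, not that it contains an integer. Concretely, with $\eps = 1$, $d_{\min}(C) = 2$, $d_{\max}(C) = 7$: your test passes since $7 \le 4\cdot 2$, you set $\ell = \lceil\log_2 2\rceil = 1$, hence $w_{i,j} = 2$; but a point $\tau$ at distance $7 > (1+\eps)^{\ell+1} = 4$ from $p$ then violates the required guarantee $w_{i,j} \ge \|p-\tau\|/(1+\eps)$. So the emitted stopping cells do \emph{not} coincide with the paper's partition, and your chosen $\ell$ is not always a valid $(1+\eps)$-approximation over the cell.

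The fix is cheap and keeps everything else intact: test the paper's condition directly. Compute $\ell \coloneqq \lceil \log_{1+\eps} d_{\max}(C) \rceil - 1$ (this is the unique candidate with $d_{\max}(C) \le (1+\eps)^{\ell+1}$ and $\ell$ minimal), then stop iff additionally $(1+\eps)^{\ell-1} \le d_{\min}(C)$. This is an $\Oh(1)$ test, the stopping cells are exactly those of Lemma~\ref{lem:few_updates} (so its count bound applies verbatim), and the remainder of your proof goes through unchanged. Alternatively, keeping your test $d_{\max}\le(1+\eps)^2 d_{\min}$ but emitting the non-power value $\sqrt{d_{\min}(C)\,d_{\max}(C)}$ as the weight is also a valid $(1+\eps)$-approximation, though then you would need to redo the cell-count argument of Lemma~\ref{lem:few_updates} for your coarser partition rather than citing it directly.
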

\begin{proof}

Lemma~\ref{lem:few_updates} already is constructive, as we associated updates to cells and thereby we can simply perform these updates at the first point of such cells.
It therefore only remains to bound the running time of all steps.
The running time for exploring the z-curve tree is dominated by the number of cells in the partition, i.e., by the number of updates, multiplied with the running time of deciding whether to explore further or not.
If the point $p = \pi_i - \sigma_j$ of Lemma~\ref{lem:few_updates} is contained in the currently considered cell, then we have to continue exploring. Otherwise, we can check if all points lie in a $(1+\eps)^{\ell-1}$ to $(1+\eps)^{\ell+1}$ distance window for any $\ell \in \ZZ$ by computing the distance to the closest and furthest point in the cell from~$p$.
All of the above steps can be done in $\Oh(1)$ time.
Finally, note that no sorting of the updates is necessary, as exploring the z-curve tree via depth-first search in the order of the z-curve already constructs the updates in sorted order.
Hence, the running time of explicitly computing the updates is dominated by the number of updates itself.
\end{proof}

We can directly use Lemma~\ref{lem:explicit_updates} to compute the updates for all node weights $w_{i,j}$. However, computing them separately would additionally incur the cost of merging them into a sorted order. We can avoid this sorting step by constructing the updates for all node weights $w_{i,j}$ in parallel using a single DFS on the z-order tree.
During the DFS, we maintain a set $E$ of pairs $(i,j)$ for which recursing further is necessary; in the top cell, this is set to $[n] \times [m]$. Then for each cell in the DFS, we need to decide for each pair $(i,j) \in E$ whether a single weight $w_{i,j} = (1+\eps)^{\ell}$ suffices to approximate the distance in this cell, which in total takes time $\Oh(|E|)$. If for $(i,j) \in E$ this is the case, then we add an update of $w_{i,j}$ to $(1+\eps)^\ell$ for the first point $\tau$ in this cell, and remove $(i,j)$ from $E$ for the recursive calls that explore the children of this cell. (We add back $(i,j)$ after the exploration.)
This process creates the updates in order and thus we do not have to sort them in a postprocessing step. It follows that the updates for all node weights $w_{i,j}$ can explicitly be computed in time $\Oh\left(nm \frac{1}{\eps^2} \cdot \log \frac{n}{\eps}\right)$.

\paragraph*{Main theorem.}

Finally, we obtain our main theorem.

\begin{theorem} \label{thm:main}
	Assume a data structure for approximate shortest paths in a directed grid graph with $N$ vertices and fixed vertices $s,t$, supporting updates of an edge weight in time $U(N)$ and $(1+\epsilon)$-approximate $s$-$t$-distance queries in time $Q(N)$. We can $(1+\epsilon)$-approximate DTW under translation in $L_p$-norm in time $\Oh(U(nm)\frac{nm}{\epsilon^2} \log^2 \frac{n}{\epsilon} + Q(nm)\frac{m}{\epsilon^2} \log \frac{n}{\epsilon})$.
\end{theorem}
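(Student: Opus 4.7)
The theorem is really a matter of assembling the ingredients developed earlier in Section~\ref{sec:apx_alg}, together with the binary search reduction sketched in Section~\ref{sec:tech_overview}. The plan is first to turn the construction of Section~\ref{sec:dynamic_graph_problem} into a concrete approximate decider with running time expressed in terms of $U$ and $Q$, and then to wrap this decider in a binary search driven by the $(n+m)$-approximation from Lemma~\ref{lem:2n-approx}.

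First, I would implement the approximate decider for threshold $\delta$ as follows. After the rounding preprocessing of Section~\ref{sec:cubic_alg}, compute the set $P$ of $nm$ point differences, locate the dense boxes in time $\widetilde{\Oh}(nm)$, and build the query set $Q$ with $|Q| = \Oh(m/\eps^2)$ (Observation~\ref{obs:dense_box}). Initialize the assumed data structure on the directed grid graph $\gdtw$ with $N = nm$ vertices and source/target $(1,1), (n,m)$. Then perform a single DFS over the z-curve tree that produces, in z-curve order, the weight updates of all $nm$ nodes jointly (as in the paragraph after Lemma~\ref{lem:explicit_updates}). Between consecutive points of $Q$, apply the scheduled updates, then query the $(1+\eps)$-approximate $s$-$t$-distance; declare ``$\dtwt(\pi,\sigma)\le (1+\eps)\delta$'' if any query returns a value at most $(1+\eps)\delta$, and ``$\dtwt(\pi,\sigma) > \delta$'' otherwise. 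Correctness follows from Lemma~\ref{lem:correctness} combined with inequality~(\ref{eq:nodeweights}). By Lemma~\ref{lem:few_updates} applied to each of the $nm$ pairs $(i,j)$, the total number of weight updates across the traversal is $\Oh\!\bigl(\tfrac{nm}{\eps^2}\log\tfrac{n}{\eps}\bigr)$, and by Lemma~\ref{lem:explicit_updates} they can be produced (in order) within the same asymptotic time, so the decider runs in
\[
	T(n,m,\eps) \;=\; \Oh\!\left(U(nm)\cdot \tfrac{nm}{\eps^2}\log\tfrac{n}{\eps} \;+\; Q(nm)\cdot \tfrac{m}{\eps^2}\right).
\]

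Second, I would convert the decider into a $(1+\eps)$-approximation via binary search on $\delta$, as outlined in Section~\ref{sec:tech_overview}. Compute $\delta_\mathrm{start} = \dtw(\pi, \sigma + \taustart)$ exactly in time $\Oh(nm)$; by Lemma~\ref{lem:2n-approx}, $\dtwt(\pi,\sigma) \in [\delta_\mathrm{start}/(n+m),\, \delta_\mathrm{start}]$. Since the endpoints of this interval differ by a factor of $n+m = \Oh(n)$, binary search on $\log \delta$ to relative precision $\eps$ requires $\Oh(\log(n/\eps))$ invocations of the decider, giving total running time
\[
	\Oh\!\left(U(nm)\cdot \tfrac{nm}{\eps^2}\log^2\tfrac{n}{\eps} \;+\; Q(nm)\cdot \tfrac{m}{\eps^2}\log\tfrac{n}{\eps}\right),
\]
matching the claim.

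The only delicate point is the bookkeeping of the approximation factor: the coordinate rounding in Section~\ref{sec:cubic_alg}, the approximate node weights~(\ref{eq:nodeweights}), the data structure's query-side error, and the binary search tolerance each contribute a $(1+\eps)$ factor. These combine to a $(1+c\eps)$-approximation for some absolute constant $c$, which I would absorb by running the entire construction with $\eps' := \eps/c$; this only changes hidden constants in the running time. Aside from this bookkeeping the proof is largely a concatenation of established statements, so I do not anticipate any substantive obstacle.
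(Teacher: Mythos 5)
Your proof is correct and follows essentially the same route as the paper: build the approximate decider by pairing the $\Oh(m/\eps^2)$ queries from Lemma~\ref{lem:correctness} with the $\Oh(\tfrac{nm}{\eps^2}\log\tfrac{n}{\eps})$ z-curve-ordered updates from Lemmas~\ref{lem:few_updates} and~\ref{lem:explicit_updates}, then wrap it in an $\Oh(\log\tfrac{n}{\eps})$-step binary search over $[\delta_{\mathrm{start}}/(n+m), \delta_{\mathrm{start}}]$ using Lemma~\ref{lem:2n-approx}, rescaling $\eps$ to absorb the compounded approximation factors.
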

The proof is deferred to Appendix~\ref{sec:deferred_proofs}; it follows easily by combining the above arguments. 

\subsection{Solving the dynamic graph problem}

Consider the data structure assumed in Theorem~\ref{thm:main} for maintaining shortest paths in a directed grid graph.
Das et al.~\cite{DPGWN22} obtain a trade-off of update time $U(N)=\tOh(N^{r})$ and query time $Q(N) = \tOh(N^{1-r})$ even for \emph{exact} distance queries in directed planar graphs where $r \in [0,\frac{1}{2}]$ is an adjustable parameter, all updates are given in advance, and all edge-weights are non-negative (both are the case in our setting).
The aforementioned result improves bounds due to Fakcharoenphol and Rao~\cite{FakcharoenpholR06} and Klein~\cite{Klein05}, also see \cite{KaplanMNS17, GawrychowskiK18}, by considering the offline setting.
For tight conditional lower bounds for the offline setting, we refer to~\cite{AbboudD16}. 
By setting $r$ such that $N^r = \sqrt{m}$ (which satisfies $r \in [0, \frac{1}{2}]$), we obtain the following corollary.

\begin{corollary}
	We can $(1+\epsilon)$-approximate DTW under translation in $\RR^2$ under the $L_p$-norm in time
	$\tOh(nm^{1.5} / \epsilon^2)$.
\end{corollary}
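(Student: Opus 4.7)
The plan is to instantiate Theorem~\ref{thm:main} with the dynamic shortest-path data structure of Das et al.~\cite{DPGWN22} and choose its trade-off parameter to balance the two terms in the running time. The DTW graph $\gdtw$ is a directed grid graph on $N = nm$ vertices, fitting the data structure's model exactly; all updates are produced in order by the z-curve traversal and can be handed over up front, so the offline setting applies. Since the data structure achieves \emph{exact} shortest paths, a fortiori it gives the $(1+\eps)$-approximate queries required by Theorem~\ref{thm:main}.

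For any $r \in [0, \tfrac12]$, the data structure offers $U(N) = \tOh(N^r)$ and $Q(N) = \tOh(N^{1-r})$. Substituting into Theorem~\ref{thm:main} gives a total running time of
\[
	\tOh\!\left( N^{r} \cdot \frac{nm}{\eps^2} + N^{1-r} \cdot \frac{m}{\eps^2} \right).
\]
Balancing the two summands amounts to setting $N^{r} = \sqrt{m}$, which makes both terms equal to $\tOh(\sqrt{m} \cdot nm / \eps^2) = \tOh(nm^{1.5}/\eps^2)$.

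The only thing to verify is that the choice $r = \log_N \sqrt{m}$ lies in the admissible range $[0,\tfrac12]$. Non-negativity is clear since $m \ge 1$. For the upper bound, recall the standing assumption $n \ge m$, so $N = nm \ge m^{2}$, hence $\sqrt{N} \ge m \ge \sqrt{m}$, which yields $r \le \tfrac12$. This is the only non-routine step, and it is immediate once the assumption $n \ge m$ is invoked; all remaining work has already been absorbed into Theorem~\ref{thm:main} and the black-box bounds of~\cite{DPGWN22}, so no genuine obstacle remains.
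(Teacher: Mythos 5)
Your proposal is correct and follows exactly the paper's route: instantiate Theorem~\ref{thm:main} with the offline dynamic planar shortest-path data structure of Das et al.\ and balance the two terms by choosing $N^r = \sqrt{m}$. The only minor remark is that your verification of $r \le \tfrac12$ invokes $n \ge m$ when in fact $\sqrt{m} \le \sqrt{N} = \sqrt{nm}$ already holds for any $n \ge 1$, but this does not affect correctness.
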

Note that for $n=m$ this becomes $\tOh(n^{2.5}/\eps^2)$.
It is straightforward to generalize our algorithm to $\RR^d$ for constant $d$. To this end, we have to replace the $2$-dimensional $\eps$-grid and the $2$-dimensional space-filling curve by their $d$-dimensional counterparts and adapt the analysis accordingly. The running time then merely increases with respect to the dependency on $\eps$.
\begin{corollary}
	We can $(1+\epsilon)$-approximate DTW under translation in $\RR^d$ under the $L_p$-norm with $d \in \Oh(1)$ in time
	$\tOh(nm^{1.5} / \epsilon^d)$.
\end{corollary}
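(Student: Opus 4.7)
The key observation is that the DTW grid graph is always indexed by $[n] \times [m]$ and therefore remains planar, so the Das et al.\ data structure applies unchanged regardless of the ambient dimension $d$. Only the geometric objects in translation-space change: the bounding box $B$ becomes a $d$-dimensional hypercube $[-\delta,\delta]^d + \taustart$, partitioned into $n^d$ axis-aligned boxes of side $2\delta/n$; the $\eps$-grid $\mathcal{G}$ lives in $\RR^d$; and the z-curve is the standard $d$-dimensional variant that at each recursion level splits a cell into $2^d$ children. My plan is to rerun the proof of Section~\ref{sec:apx_alg} with these $d$-dimensional replacements of the geometric primitives, leaving the reduction to the dynamic planar shortest-path problem untouched.

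Redoing the dense-box argument is essentially bookkeeping. The Markov step behind Lemma~\ref{lem:dense_cell} is dimension-free and still produces at least $n/2$ points of $P$ within distance $2\delta/n$ of any witness translation $\tau^*$; such a ball is now covered by at most $3^d$ adjacent boxes, so raising the density threshold to $n/(2\cdot 3^d)$ still guarantees one dense neighbor. Since $|P| = nm$, we still get $\Oh(m)$ dense boxes with $3^d = \Oh(1)$ neighbors each, and each neighbor contributes $\Oh(1/\eps^d)$ discretization points, so $|Q| = \Oh(m/\eps^d)$.

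The main quantitative obstacle is re-proving Lemma~\ref{lem:few_updates} in $d$ dimensions, which I would do by the same shell-vs-cell volume argument. Fix a node $(i,j)$ with $p = \pi_i-\sigma_j$ and traverse the $d$-dimensional z-curve, stopping at each cell $C$ that fits entirely inside a distance shell $[(1+\eps)^{\ell-1},(1+\eps)^{\ell+1}]$ around $p$. The parent-to-child diameter shrinks by only a constant factor, so a cell associated with level $\ell$ still has diameter $\Omega((1+\eps)^\ell\eps)$ and therefore $L_p$-volume $\Omega((1+\eps)^{d\ell}\eps^d)$, while the shell has volume $\Oh((1+\eps)^{d\ell}\eps)$ (surface area $\Theta((1+\eps)^{(d-1)\ell})$ times thickness $\Theta((1+\eps)^\ell\eps)$). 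The ratio gives $\Oh(1/\eps^{d-1})$ cells per distance level; combined with $\Oh(\frac{1}{\eps}\log\frac{n}{\eps})$ distance levels this yields $\Oh(\frac{1}{\eps^d}\log\frac{n}{\eps})$ updates per node. The constructive counterpart of Lemma~\ref{lem:explicit_updates} transfers verbatim because the per-cell test still runs in $O(1)$ for constant~$d$.

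Putting everything together: across all $nm$ nodes of the DTW grid graph we perform $\Oh(\frac{nm}{\eps^d}\log\frac{n}{\eps})$ weight updates, followed by $\Oh(m/\eps^d)$ shortest-path queries. Invoking the same Das et al.\ data structure with parameter $r$ tuned so $N^r = \sqrt{m}$, that is $U(nm)=\tOh(\sqrt{m})$ and $Q(nm)=\tOh(n\sqrt{m})$, the total cost becomes $\tOh(\sqrt{m}\cdot nm/\eps^d + n\sqrt{m}\cdot m/\eps^d) = \tOh(nm^{1.5}/\eps^d)$, matching the claimed bound.
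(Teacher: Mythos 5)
Your proposal is correct and takes the same route the paper intends: replace the planar $\eps$-grid and $2$-dimensional z-curve by their $d$-dimensional counterparts, and redo the dense-box count and the shell-versus-cell volume argument with exponent $d$ in place of $2$. The paper only asserts this generalization in a single sentence, whereas you work out the bookkeeping (threshold $n/(2\cdot 3^d)$, shell volume $\Theta((1+\eps)^{d\ell}\eps)$, cell volume $\Omega((1+\eps)^{d\ell}\eps^d)$, hence $\Oh(1/\eps^{d-1})$ cells per level and $\Oh(\eps^{-d}\log(n/\eps))$ updates per node); the resulting update/query counts and the tuning of the Das et al.\ data structure match the paper exactly, yielding $\tOh(nm^{1.5}/\eps^d)$.
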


\section{Conclusion and open problems}

We give the first rigorous algorithms for Dynamic Time Warping under translation, specifically an exact $\Oh(n^{2(d+1)})$-time algorithm for the $L_1$ norm in $\mathbb{R}^d$, as well as a $(1+\eps)$-approximate $\tOh(n^{2.5}/\epsilon^2)$-time algorithm for the $L_p$-norm in $\mathbb{R}^2$.

The most interesting open problem is to determine whether under the $L_2$-norm, DTW under translation admits an $\tOh(n^2 f(1/\epsilon))$-time approximation scheme.
In fact, one might be able to improve over our $\tOh(n^{2.5}/\epsilon^2)$-time algorithm via purely graph-theoretic improvements for dynamic shortest path algorithms in grid graphs, applying Theorem~\ref{thm:main} as a black box.
Specifically, we showed how to reduce $(1+\epsilon)$-approximate DTW under translation to (approximately) maintaining the $s$-$t$ distance in a directed grid graph undergoing edge-weight updates.
Our precise bound follows from plugging in a data structure due to Das et al.~\cite{DPGWN22} that maintains all \emph{exact} distances.
In fact, compared to their setting, our target problem has several important restrictions that may help to design faster algorithms:
\begin{itemize}
	\item Instead of exact distances, our application only requires a $(1+\epsilon)$-approximation.
	\item Our restriction to directed grid graphs might turn out significantly simpler than general planar digraphs.
	\item We only ever query the distance between a single source-sink pair.
\end{itemize}
Finally, if no further algorithmic improvements can be found, can we give improved conditional hardness results, going beyond our reduction from DTW for a fixed translation?

\bibliography{main}

\appendix

\section{Deferred proofs} \label{sec:deferred_proofs}

\begin{proof}[Proof of Lemma~\ref{lem:lipschitz}]
Consider any traversal $T$ of $\pi$ and $\sigma$. The cost for this traversal without translation and with translation $\tau$ can be at most $(n+m-1)\cdot \|\tau\|$ apart as the distance between two points is $1$-Lipschitz and a traversal has at most $n+m-1$ steps. Formally, by the triangle inequality and $|T| \le n+m-1$, we have
\[
	\sum_{(i,j) \in T} \|\pi_i - (\sigma_j + \tau)\| \leq |T| \cdot \|\tau\| + \sum_{(i,j) \in T} \|\pi_i - \sigma_j\| \leq (n+m-1) \|\tau\| + \sum_{(i,j) \in T} \|\pi_i - \sigma_j\|.
\]
Symmetrically it follows that $\sum_{(i,j) \in T} \|\pi_i - (\sigma_j + \tau)\| \ge \sum_{(i,j) \in T} \|\pi_i - \sigma_j\| - (n+m-1) \|\tau\|$. Since this holds for any traversal $T$, by taking the minimum over all traversals $T$, we obtain $\dtw(\pi,\sigma) - (n+m-1)\|\tau\| \le \dtw(\pi,\sigma+\tau) \le \dtw(\pi,\sigma) + (n+m-1) \|\tau\|$.

More generally, we can show $|\dtw(\pi,\sigma + \tau') - \dtw(\pi,\sigma+\tau)| \le (n+m-1) \|\tau - \tau'\|$ for any translations $\tau,\tau'$, by applying the above argument to the curves $\hat \pi := \pi, \hat \sigma := \sigma + \tau'$ and the translation $\hat \tau := \tau - \tau'$. Thus, $\dtw(\pi,\sigma+\tau)$ is $(n+m-1)$-Lipschitz in $\tau$.
\end{proof}

\begin{proof}[Proof of Lemma~\ref{lem:2n-approx}]
	Let $\delta^* \coloneqq \dtwt(\pi, \sigma)$ and let $\tau^*$ be such that $\dtw(\pi, \sigma + \tau^*) = \delta^*$, which implies that $\norm{\pi_1 - (\sigma_1 + \tau^*)}\le \delta^*$. Thus $\norm{\taustart - \tau^*} = \norm{\pi_1 - (\sigma_1 + \tau^*)} \le \delta^*$. Together with Lemma~\ref{lem:lipschitz}, we obtain $\dtw(\pi, \sigma + \taustart) \le \dtw(\pi, \sigma + \tau^*) + (n+m-1) \delta^* = (n+m) \delta^*$.
\end{proof}

\begin{proof}[Proof of Lemma~\ref{lem:dense_cell}]
	Towards a contradiction, assume that $\dtwt(\pi, \sigma) \leq \delta$, i.e., there exists a $\tau$ with $\dtw(\pi, \sigma + \tau) \leq \delta$, but the box $B_i$ with $\tau \in B_i$ is not in the neighborhood of a dense box. Then there are less than $9 \cdot \frac{n}{18} = \frac{n}{2}$ points in $P$ which are in distance at most $\frac{2\delta}{n}$ from $\tau$. As any traversal of $\pi$ and $\sigma$ consists of at least $n$ steps, it also contains more than $\frac{n}{2}$ pairs in distance more than $\frac{2\delta}{n}$ and thus $\dtw(\pi, \sigma + \tau) > \frac{n}{2} \cdot \frac{2\delta}{n} > \delta$.
\end{proof}

\begin{proof}[Proof of Lemma~\ref{lem:correctness}]
	We prove the contrapositive.
	Assume $\dtwt(\pi, \sigma) \leq \delta$. Then it follows from Lemma~\ref{lem:dense_cell} that there exists a box $B_i \in N(B_j)$ for some dense box $B_j$ and a $\tau \in B_i$ such that $\dtw(\pi, \sigma + \tau) \leq \delta$. Thus, there exists a $q \in Q$ with $\|\tau - q\| \leq \frac{\delta}{4n}\epsilon$. As DTW is $(n+m-1)$-Lipschitz in the translation (Lemma~\ref{lem:lipschitz}), we have that
\[
	\dtw(\pi, \sigma+q) \leq (n+m-1) \frac{\delta}{4n}\epsilon + \dtw(\pi, \sigma+\tau) \leq \frac{\eps}{2}\delta + \delta \leq (1+\frac{\eps}{2})\delta,
\]
which concludes the proof.
\end{proof}

\begin{proof}[Proof of Theorem~\ref{thm:main}]
First, we argue that for an approximate decision we need time
\[
	\Oh\Big(U(nm)\frac{nm}{\epsilon^2} \log \frac{n}{\epsilon} + Q(nm)\frac{m}{\epsilon^2}\Big).
\]
Note that the size of the DTW graph is $nm$, thus $N = nm$.
By Lemma~\ref{lem:correctness} the number of queries is the size of the set $Q$, which is $\Oh(\frac{m}{\epsilon^2})$.
By Lemma~\ref{lem:explicit_updates} the total number of updates and also the time to construct them is $\Oh(\frac{nm}{\epsilon^2} \log \frac{n}{\epsilon})$.

To compute the value of DTW under translation up to a factor of $(1+\eps)$, we perform binary search as follows. First, we compute a $(n+m)$-approximation by $\delta \coloneqq \dtw(\pi, \sigma+\taustart)$ with $\taustart = \pi_1 - \sigma_1$, see Lemma~\ref{lem:2n-approx}. Thus, the optimal solution has to lie in the range $[\delta/(n+m), \delta]$. Then, we perform binary search on this range using our approximate decider with an approximation parameter of $\eps/3$. It takes $\Oh(\log \frac{n}{\eps})$ steps until we narrow down the range to an interval of the form $[\delta', (1+\eps)\delta']$, which yields a $(1+\eps)$-approximation. The claimed number of updates and queries follows.
\end{proof}

\section{\boldmath DTW under translation in $L_1$} \label{sec:exact_l1}

In this section we present an exact polynomial-time algorithm for DTW under translation in~$\RR^d$ under $L_1$, and argue that it is near-optimal for a restricted class of curves.
Note that in $\RR^2$ we can transform an $L_\infty$ instance into an $L_1$ instance by rotating it by $\frac{\pi}{2}$ and scaling it by $\frac{1}{\sqrt{2}}$. This trick does not carry over to $d > 2$.

\subsection{\boldmath $\Oh(n^{2(d+1)})$ Algorithm for $L_1$}

We first describe a structural insight that subsequently helps us to give an algorithm with the claimed running time.

\begin{lemma} \label{lem:l1_structure}
Let $\pi = (\pi_1, \dots, \pi_n)$ and $\sigma = (\sigma_1, \dots, \sigma_m)$ be two curves in $\RR^d$ under the $L_1$ norm, and let $f(\tau) \coloneqq \dtw(\pi, \sigma+\tau)$.
Some minimizer $\tau^* \in \arg\min f(\tau)$ is contained in
\[
	\{ (\tau_1, \dots, \tau_d) \mid \tau_k \in \{ \pi_i[k] - \sigma_j[k] \mid i \in [n], j \in [m] \} \}.
\]
\end{lemma}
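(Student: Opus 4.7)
The plan is to exploit the coordinate-wise decomposability of the $L_1$ norm. Fix any minimizer $\tau^\circ \in \arg\min f$ together with a traversal $T^\circ$ realizing $\dtw(\pi, \sigma + \tau^\circ)$, and define
\[
    f_{T^\circ}(\tau) \coloneqq \sum_{(i,j)\in T^\circ} \|\pi_i - (\sigma_j + \tau)\|_1.
\]
By definition $f(\tau) \le f_{T^\circ}(\tau)$ for every $\tau$, with equality at $\tau = \tau^\circ$. Therefore any $\tau$ that minimizes $f_{T^\circ}$ also minimizes $f$, and it suffices to exhibit a minimizer of $f_{T^\circ}$ of the claimed form.

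Next, expand the $L_1$ norm coordinate-wise to obtain
\[
    f_{T^\circ}(\tau) \;=\; \sum_{k=1}^d g_k(\tau_k), \qquad g_k(x) \coloneqq \sum_{(i,j)\in T^\circ} \bigl|\pi_i[k] - \sigma_j[k] - x\bigr|.
\]
This completely decouples the minimization across the $d$ coordinates, reducing the problem to $d$ independent univariate problems, each of which is the classical sum-of-absolute-values problem in one variable.

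For each $k$, the function $g_k$ is convex and piecewise linear, and is minimized precisely at the median of the multiset $M_k \coloneqq \{\pi_i[k] - \sigma_j[k] : (i,j) \in T^\circ\}$. If $|T^\circ|$ is odd, this median is a unique element of $M_k$; if $|T^\circ|$ is even, the set of minimizers is an interval $[a_k, b_k]$ whose two endpoints both lie in $M_k$. In either case, there is a minimizer $\tau^*_k$ of $g_k$ with $\tau^*_k = \pi_i[k] - \sigma_j[k]$ for some $(i,j) \in T^\circ \subseteq [n] \times [m]$.

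Combining the choices coordinate-wise yields a point $\tau^* = (\tau^*_1, \dots, \tau^*_d)$ minimizing $f_{T^\circ}$, and by the first paragraph also $f$, whose every coordinate has the required form. The only mild subtlety to address carefully is the even-cardinality case of step three, where one must note that the endpoints (rather than any interior point) of the median interval already achieve the univariate minimum and lie in $M_k$; apart from that, the argument is short.
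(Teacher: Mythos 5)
Your proof is correct, and it takes a genuinely different route from the paper's. The paper partitions the translation space $\RR^d$ into cells by the axis-parallel hyperplanes through the points of $P = \{\pi_i - \sigma_j\}$, observes that every per-pair distance $\norm{\pi_i - (\sigma_j+\tau)}_1$ is linear within a cell, so $f$ restricted to a cell is a minimum of linear functions and hence concave, and concludes that the minimum is attained at a cell corner. You instead fix an optimal translation $\tau^\circ$ together with an optimal traversal $T^\circ$ for it, replace $f$ by the fixed-traversal upper bound $f_{T^\circ}$ which touches $f$ at $\tau^\circ$ (so any minimizer of $f_{T^\circ}$ also minimizes $f$), and then use the coordinate-wise separability of $L_1$ to reduce to $d$ independent one-dimensional sum-of-absolute-values problems, each solved at a median of $\{\pi_i[k]-\sigma_j[k]: (i,j)\in T^\circ\}$. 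All the steps check out: the ``upper bound touching at the optimum'' argument is valid, the coordinate decoupling is exact for $L_1$, and in the even-cardinality case both endpoints of the median interval belong to the multiset, as you note.

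Your approach is more elementary and arguably cleaner: it directly exhibits a minimizer and avoids having to reason about concave functions on the (partly unbounded) cells of the arrangement, which the paper's proof treats somewhat informally. It also gives slightly more information, namely that there is a minimizer whose $k$-th coordinate lies in $\{\pi_i[k]-\sigma_j[k]:(i,j)\in T^\circ\}$, a set of size at most $n+m-1$ rather than $nm$ (though this is not directly exploitable algorithmically since $T^\circ$ is unknown). The paper's cell-decomposition view, on the other hand, exposes the piecewise structure of $f$ more explicitly, which can be helpful for intuition and for related arguments.
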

\begin{proof}
Consider the set
\[
	P \coloneqq \{ \pi_i - \sigma_j \mid i \in [n], j \in [m] \},
\]
which consists of all translations that translate any point $\sigma_j$ onto any point $\pi_i$. Consider the grid resulting from all axis parallel planes through points in $P$. This grid induces a partition $\mathcal{P}$ of $\RR^d$ (where the cells of the partition overlap at their borders).
In the following we show that inside each cell the minimum of $f$ is attained at a corner of the cell. This implies the lemma, since the set $\{ (\tau_1, \dots, \tau_d) \mid \tau_k \in \{ \pi_i[k] - \sigma_j[k] \mid i \in [n], j \in [m] \} \}$ precisely describes all corners of cells of our partition.

Note that for any point $p \in \RR^d$, the distance function $\norm{p - q}_1$ is a piece-wise linear function over $q \in \RR^d$ consisting of $2^d$ pieces. These pieces form a partition that is induced by all $d$ planes that go through $p$ and are orthogonal to one of the coordinate axes.
Now consider a cell $[x_1, x_1'] \times \dots \times [x_d, x_d']$ of the partition $\mathcal{P}$. As the cells are defined by axis-parallel planes through the points of $P$, there cannot be any point with $i$th coordinate in the range $(x_i, x_i')$. Thus, for all $\pi_i$ and $\sigma_j$, the distance $\norm{\pi_i-(\sigma_j+\tau)}$ is a linear function inside of the cell.
It follows that within a cell the Dynamic Time Warping distance for a fixed traversal is linear as a function of $\tau$, since it is a sum of linear functions.
Minimizing over all traversals, we see that within each cell the function $f(\tau)$ is a minimum of linear functions and thus concave. Consequently, within each cell $f(\tau)$ attains its minimum at a corner of the cell.
\end{proof}
The algorithm to compute the minimum then follows immediately.
\begin{theorem}
	Dynamic Time Warping under translation in $\RR^d$ under $L_1$ for two curves of length $n$ and $m$ can be computed in time $\Oh((nm)^{d+1})$.
\end{theorem}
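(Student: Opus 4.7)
The plan is to combine the structural result of Lemma~\ref{lem:l1_structure} with the standard dynamic programming algorithm for DTW. By Lemma~\ref{lem:l1_structure}, there exists a minimizer $\tau^* = (\tau_1^*, \dots, \tau_d^*)$ of $f(\tau) = \dtw(\pi, \sigma + \tau)$ such that for each coordinate $k \in [d]$, the value $\tau_k^*$ lies in the finite set $S_k \coloneqq \{\pi_i[k] - \sigma_j[k] \mid i \in [n], j \in [m]\}$. Since $|S_k| \le nm$ for every $k$, the total number of candidate translations is at most $(nm)^d$.

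The algorithm then simply enumerates the candidate set $C \coloneqq S_1 \times \dots \times S_d$ of size $\Oh((nm)^d)$, computes $\dtw(\pi, \sigma + \tau)$ for each $\tau \in C$ via the classical $\Oh(nm)$-time dynamic programming algorithm, and returns the minimum. Correctness is immediate from Lemma~\ref{lem:l1_structure}: some optimum translation lies in $C$, so the minimum over $C$ equals $\dtwt(\pi, \sigma)$.

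For the running time, constructing the sets $S_1,\dots,S_d$ takes $\Oh(dnm)$ time, enumerating the Cartesian product $C$ is straightforward, and each of the $\Oh((nm)^d)$ DTW evaluations costs $\Oh(nm)$, giving a total of $\Oh((nm)^{d+1})$ as claimed. There is no real obstacle here beyond Lemma~\ref{lem:l1_structure}, which has already been established; the proof is essentially an immediate corollary of that structural lemma combined with the standard DTW dynamic program, and the only thing to be careful about is ensuring we consider all corners of the cells of the arrangement (which is exactly what the Cartesian product enumeration does).
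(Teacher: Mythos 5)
Your proof is correct and follows essentially the same approach as the paper: both enumerate the Cartesian product of coordinate-wise differences $S_1 \times \dots \times S_d$ (the paper calls these the "corners of the grid") and run the standard $\Oh(nm)$ DTW dynamic program on each of the $\Oh((nm)^d)$ candidates, with correctness resting on Lemma~\ref{lem:l1_structure}.
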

\begin{proof}
We first construct $P = \{\pi_i - \sigma_j \mid i \in [n], j \in [m]\}$ in $\Oh(nm)$ time. Then we construct the grid by sorting according to each dimension, respectively, in time $\Oh(nm \log nm)$. As the minimum has to lie in a corner, we can simply compute $\dtw(\pi,\sigma+\tau)$ for all the corners $\tau \in \RR^d$ of the grid. We then return the minimum. As there are $\Oh((nm)^d)$ corners and computing the Dynamic Time Warping distance is in time $\Oh(nm)$, we end up with a running time of $\Oh((nm)^{d+1})$.
\end{proof}

\subsection{Curves with a constant number of different coordinates}

Consider the restricted class of curves in which all coordinates take values in $C$, for a fixed set $C$ of constant size. The DTW problem remains non-trivial with this restriction. This class of curves comes up in the SETH-hardness proof of DTW~\cite{BringmannK15}.

Note that when all coordinates come from $C$ we can further improve the running time bound of our algorithm from the previous section. Indeed, then each candidate translation must have coordinates in $C - C = \{x - y \mid x,y \in C\}$. Since $|C-C| \le |C|^2$, there are at most $|C|^{2d}$ candidate translations. 
In particular, the overall running time of the algorithm becomes $\Oh(|C|^{2d} nm)$, which is $\Oh(nm)$ for constant $|C|$ and $d$.

When considering this tighter bound on the running time, then our algorithm is tight under SETH in any constant dimension $d$.
Indeed, it is known that computing DTW with a fixed translation requires time $(nm)^{1-o(1)}$ assuming the Strong Exponential Time Hypothesis, even in dimension $d=1$~\cite{BringmannK15}. Using our arguments from Appendix~\ref{sec:LBfixedtranslation}, this hardness carries over to DTW under translation (using a single additional coordinate value). By embedding $d=1$ into higher dimensions, the same holds for any constant dimension $d$.
Hence, there is a constant-size set $C$ such that for any constant dimension $d$ DTW under translation for curves with coordinates in $C$ requires time $(nm)^{1-o(1)}$. This matches our running time of $\Oh(nm)$ for constant $C$ and $d$.

\section{Lower bound} \label{sec:LBfixedtranslation}

Intuitively, DTW under translation is at least as hard as DTW for a fixed translation. 
The following proposition formalizes this intuition, both for exact and approximation algorithms.

\begin{proposition}\label{prop:LB}
	Let $c\ge 1$. If DTW under translation (under $L_p$-norm) can be $c$-approximated in time $T(n)$, then DTW (under $L_p$-norm) can be $c$-approximated in time $O(T(n))$.
\end{proposition}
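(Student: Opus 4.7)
The plan is to reduce DTW (computing $\dtw(\pi,\sigma)$) to DTW under translation by constructing a new instance on which the optimal translation is forced to be the origin. Given an input instance $(\pi,\sigma)$ with $|\pi|=n$, $|\sigma|=m$, I would pad both curves with $K := n+m$ copies of the origin at both ends:
\[
  \pi' := \underbrace{(0,\dots,0)}_{K} \oplus \pi \oplus \underbrace{(0,\dots,0)}_{K},
  \qquad
  \sigma' := \underbrace{(0,\dots,0)}_{K} \oplus \sigma \oplus \underbrace{(0,\dots,0)}_{K}.
\]
The new instance has size $\Theta(n+m)$, so running the assumed $c$-approximation algorithm for $\dtwt$ on $(\pi',\sigma')$ takes time $T(O(n)) = O(T(n))$ and returns $\hat\delta$ with $\dtwt(\pi',\sigma') \le \hat\delta \le c\cdot \dtwt(\pi',\sigma')$. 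The goal is then to establish the identity $\dtwt(\pi',\sigma') = \dtw(\pi,\sigma)$, so that $\hat\delta$ is automatically a $c$-approximation of $\dtw(\pi,\sigma)$.

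The upper bound $\dtwt(\pi',\sigma') \le \dtw(\pi,\sigma)$ is immediate by taking $\tau=0$: the padded zeros contribute no cost and an optimal inner traversal of $(\pi,\sigma)$ achieves $\dtw(\pi,\sigma)$. For the lower bound, I would show that for every $\tau$ and every traversal $T'$ of $(\pi',\sigma'+\tau)$, the cost of $T'$ is at least $\dtw(\pi,\sigma)$. The core idea is to partition the pairs of $T'$ into \emph{padding} pairs (those with $\pi'_i = \sigma'_j = 0$, each contributing exactly $\|\tau\|$ at translation $\tau$ but zero at $\tau=0$) and \emph{non-padding} pairs. By the reverse triangle inequality, each non-padding pair's cost at $\tau$ deviates from its cost at $\tau=0$ by at most $\|\tau\|$, so summing gives $\mathrm{cost}(T',\tau) \ge \mathrm{cost}(T',0) + (L_{\mathrm{pad}} - L_{\mathrm{nonpad}})\|\tau\|$. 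A simple monotonicity argument shows any path from $(1,1)$ to $(n+2K,m+2K)$ must visit at least $K$ pairs in the prefix-padding region $\{i,j \le K\}$ and at least $K$ in the suffix-padding region $\{i > K+n,\, j > K+m\}$, giving $L_{\mathrm{pad}} \ge 2K$. For canonical traversals (diagonal padding plus an inner traversal of $(\pi,\sigma)$) we have $L_{\mathrm{nonpad}} \le n+m-1 \le 2K$, and therefore $\mathrm{cost}(T',\tau) \ge \mathrm{cost}(T',0) \ge \dtw(\pi',\sigma') = \dtw(\pi,\sigma)$.

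The main obstacle is handling non-canonical traversals with many transition pairs (one coordinate in padding, the other in the inner curve); these count as non-padding and could in principle inflate $L_{\mathrm{nonpad}}$ beyond $L_{\mathrm{pad}}$. I would resolve this via a local-exchange argument: any transition pair can be rerouted through a padding pair followed by an inner pair without increasing cost, so we may assume the optimal traversal is canonical, for which the counting bound above applies. Once $\dtwt(\pi',\sigma') = \dtw(\pi,\sigma)$ is established, the reduction immediately yields that any $c$-approximation for $\dtwt$ gives a $c$-approximation for $\dtw$ with only a constant blow-up in instance size, establishing the proposition.
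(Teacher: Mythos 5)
Your reduction follows the same high-level template as the paper's (pad both curves with identical extra vertices so that the optimal translation is forced to be small, then argue that the optimal traversal must be ``canonical''), but your choice of padding point—the origin—is fatal. The paper prepends copies of a point $r=(5nb,0)$ that lies \emph{far outside} the bounding ball of $\pi,\sigma$, and this distance is what makes the argument work. With that replaced by the origin, your key identity $\dtwt(\pi',\sigma')=\dtw(\pi,\sigma)$ is simply false.

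Concretely, take $\pi=\bigl((1,0),(-1,0)\bigr)$ and $\sigma=\bigl((-1,0),(1,0)\bigr)$ in the Euclidean plane. Then $\dtw(\pi,\sigma)=4$: every traversal must pair $\pi_1$ with $\sigma_1$ and $\pi_2$ with $\sigma_2$, each costing $2$. Now pad both ends with $K=4$ copies of the origin and consider $\tau=0$. The traversal
\[
(1,1),(2,2),(3,3),(4,4),\ (5,4),\ (6,5),\ (7,6),\ (8,7),(9,8),(10,9),(10,10)
\]
pairs $\pi_1=(1,0)$ with a padding zero of $\sigma'$ (cost $1$), pairs $\pi_2=(-1,0)$ with $\sigma_1=(-1,0)$ (cost $0$), and pairs $\sigma_2=(1,0)$ with a padding zero of $\pi'$ (cost $1$); all other pairs are padding--padding and cost $0$. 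Total cost is $2<4$, so $\dtw(\pi',\sigma')\le 2<\dtw(\pi,\sigma)$. This already breaks your chain $\mathrm{cost}(T',0)\ge\dtw(\pi',\sigma')=\dtw(\pi,\sigma)$, and it also shows that your proposed local-exchange argument cannot succeed: the optimal traversal here is genuinely non-canonical, and rerouting the transition pairs through padding--padding plus inner pairs would strictly \emph{increase} cost. The root cause is that the origin can be closer to inner vertices than those vertices are to each other, so shortcutting through padding is a real saving.

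The paper sidesteps this by choosing the padding point $r$ at distance $5nb$ from the origin, where $b$ bounds the curves' radius. Then any transition pair costs at least $\|r\|-b-\|\tau^*\|>2nb$, which already exceeds the cheap upper bound $\dtw(\pi,\sigma)\le 2nb$, so the optimal traversal is forced to traverse the padding block diagonally before entering the inner curves. The rest of the paper's argument (bounding $\|\tau^*\|\le b$ via the $2n$ prefix pairs, then using $(n+m-1)$-Lipschitzness to absorb the $2n\|\tau^*\|$ padding cost) then goes through. If you replace your zero padding with such a far-away point and prepend (appending is unnecessary), your proof plan becomes essentially the paper's.
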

From this proposition, we obtain the following consequences:
\begin{itemize}
	\item \emph{Exact algorithms:} We cannot solve DTW under translation in time $O(n^{2-\delta})$ with $\delta > 0$ unless the Strong Exponential Time Hypothesis is false. This follows from the corresponding result for DTW for fixed translation which holds already for $d=1$~\cite{AbboudBW15,BringmannK15}.
	\item \emph{Approximation algorithms:} A $(1+\epsilon)$-approximation algorithm for DTW under translation in time $O(n^{2-\delta}f(1/\epsilon))$ with $\delta > 0$ and any function $f$ would give a novel approximation algorithm for DTW for fixed translation, beating current guarantees, cf.~\cite{Kuszmaul19, AgarwalFPY16, YingPFA16}. 
\end{itemize}

To state the proof, let us prepare some notation: For two curves $\rho=(\rho_1, \dots, \rho_n), \rho'=(\rho_1',\dots,\rho_m')$, we let $\rho \circ \rho' = (\rho_1, \dots, \rho_n, \rho_1',\dots,\rho_m')$ denote the concatenation of $\rho$ and $\rho'$. Furthermore, for a point $r\in \mathbb{R}^2$ and $n\in \mathbb{N}$, let $r^n$ denote the $n$-vertex curve given by the sequence of $n$ copies of $r$. 

\begin{proof}[Proof of Proposition~\ref{prop:LB}]
	Let $\pi, \sigma$ be polygonal curves with at most $n$ vertices and let $b > 0$ be such that $\pi,\sigma$ are contained in the radius-$b$ ball $B\coloneqq B_b(0)$, centered around the origin. Let $\pi', \sigma'$ be obtained from $\pi, \sigma$ by prepending $2n$ copies of the point $r \coloneqq (5 nb, 0)$, i.e., $\pi' = r^{2n} \circ \pi$ and $\sigma' = r^{2n} \circ \sigma$.  We prove that $\dtwt(\pi',\sigma') = \dtw(\pi, \sigma)$, which yields the desired reduction.

	We first show that $\dtwt(\pi',\sigma') \le \dtw(\pi, \sigma) \le 2nb$: Consider the translation $\tau = 0$ and any traversal of $r^{2n},r^{2n}$ -- this incurs a cost of 0 -- followed by an optimal traversal of $\pi,\sigma$ -- this incurs a cost of $\dtw(\pi, \sigma)$. We thus obtain $\dtwt(\pi',\sigma')\le \dtw(\pi, \sigma)$. Furthermore, since $\pi,\sigma$ are contained in $B$, any traversal with $L$ steps has cost at most $2Lb$, and the second inequality follows, since there is a traversal of $\pi, \sigma$ with at most $n$ steps.  
	
	It remains to prove that $\dtwt(\pi',\sigma') \ge \dtw(\pi, \sigma)$: Let $\tau^*$ be such that $\dtw(\pi', \sigma' + \tau^*)$ is minimized. If $\norm{\tau^*} > b$, then the first $2n$ pairs of any traversal of $\pi',\sigma'+\tau^*$ incur a cost of $\norm{r - (r+\tau^*)}=\norm{\tau^*} > b$ each. Thus $\dtw(\pi', \sigma' + \tau^*) > 2nb$, which is a contradiction. Thus, we have $\norm{\tau^*} \le b$. The optimal traversal of $\pi',\sigma'+\tau^*$ must be a traversal of $r^{2n},r^{2n}+\tau^*$, followed by a traversal of $\pi,\sigma + \tau^*$, since otherwise the traversal contains a pair of the form $r, \sigma_i+\tau^*$ or $\pi_j, r+\tau^*$, which incurs a cost of at least $\norm{r} - (\max_{x\in B} \norm{x}) - \norm{\tau^*} \ge 5nb - 2b > 2nb$. Since an optimal traversal of $r^{2n},r^{2n}+\tau^*$ incurs cost $2n\norm{\tau^*}$, we have that
	\[\dtw(\pi', \sigma'+\tau^*) \ge 2n\norm{\tau^*} + \dtw(\pi, \sigma+\tau^*) \ge 2n\norm{\tau^*} + \dtw(\pi, \sigma) - 2n\norm{\tau^*} = \dtw(\pi, \sigma),\]
	where we used the Lipschitz property (Lemma~\ref{lem:lipschitz}) in the second inequality. This concludes the proof of $\dtwt(\pi',\sigma')=\dtw(\pi,\sigma)$.

	Finally, note that $\pi', \sigma'$ are polygonal curves with $O(n)$ vertices that can be constructed in time $O(n)$. Thus, any algorithm $c$-approximating $\dtwt(\pi', \sigma')$ in time $T(n)$ yields an algorithm $c$-approximating $\dtw(\pi, \sigma)$ in time $O(T(n))$, since $\dtwt(\pi', \sigma') = \dtw(\pi, \sigma)$.
\end{proof}

\end{document}